\newcommand{\abs}[1]{\left\lvert #1 \right\rvert}
\newtheorem{theorem}{Theorem}
\newtheorem{lemma}[theorem]{Lemma}
\newtheorem*{theorem*}{Theorem}
\newtheorem*{lemma*}{Lemma}
\newtheorem{definition}{Definition}
\newcommand{\cmark}{\ding{51}}   
\newcommand{\xmark}{\ding{55}}   
\newcommand{\IndState}{\State\hspace{\algorithmicindent}}
\newcommand{\IndIndState}{\IndState\hspace{\algorithmicindent}}
\newcommand{\IndIndIndState}{\IndIndState\hspace{\algorithmicindent}}
\newcommand{\IndIndIndIndState}{\IndIndIndState\hspace{\algorithmicindent}}
\newcommand{\amp}{\mathrm{AMP}}
\newcommand{\ho}{\mathrm{HO}}
\newcommand{\flp}{\mathrm{FLP}}
\newcommand{\cfho}{\mathrm{CFHO}}
\newcommand{\sfho}{\mathrm{SFHO}}
\newcommand{\calM}{\mathcal{M}}
\newcommand{\calP}{\mathcal{P}}
\newcommand{\calQ}{\mathcal{Q}}
\newcommand{\eg}{\emph{e.g.}}
\title{Equivalence and Separation between Heard-Of and Asynchronous Message-Passing Models}
\author{Hagit Attiya\\
    \small{Technion -- Israel Institute of Technology, Israel}
\and
Armando Castañeda\\
\small{Instituto de Matemáticas, Universidad Nacional Autónoma de México, Mexico}
\and
Dhrubajyoti Ghosh\\
\small{Université Paris-Saclay, CNRS, ENS Paris-Saclay, Laboratoire Méthodes Formelles, France}
\and
Thomas Nowak\\
\small{Université Paris-Saclay, CNRS, ENS Paris-Saclay, Laboratoire Méthodes Formelles, France}\\
\small{Institut Universitaire de France, France}
}
\date{}
\begin{document}

\maketitle

\begin{abstract}
We revisit the relationship between two fundamental models of distributed computation:
the asynchronous message-passing model with up to~$f$ crash failures ($\amp_f$)
and the Heard-Of model with up to~$f$ message omissions ($\ho_f$).
We show that for $n > 2f$, the two models are equivalent with respect to the
solvability of colorless tasks, and that for colored tasks the equivalence holds
only when $f = 1$ (and $n > 2$).
The separation for larger $f$ arises from the presence of \emph{silenced processes}
in $\ho_f$, which may lead to incompatible decisions.
The results are proved through bidirectional simulations between $\amp_f$ and $\ho_f$,
using an intermediate model that captures this notion of silencing.
The results extend to randomized protocols against a non-adaptive adversary,
indicating that the expressive limits of canonical rounds are structural rather
than probabilistic.
Together, these results help to delineate where round-based abstractions capture
asynchronous computation, and where they do not.
\end{abstract}

{\small \textbf{Keywords:} Equivalence of models, Asynchronous message-passing model, Heard-Of model, Message adversaries, Task solvability}



\section{Introduction}





Distributed computing has been studied through a variety of models that differ
in communication primitives, synchrony assumptions, and failure types.
A natural question is how these models compare in their computational power.
Specifically, we compare the sets of \emph{tasks} solvable in different models.
This is particularly valuable when two models are equivalent,
but one offers a simpler setting for analysis or verification.
Simulations are often used to provide these relationships,
relating models that differ in synchrony or fault tolerance.




The standard \emph{Asynchronous Message-Passing model with $f$ failures} ($\amp_f$)
assumes unbounded message delivery times and allows up to $f$ process crashes.
Many protocols in this model proceed in \emph{rounds}: each process broadcasts
a message tagged with its current round number and waits for $n - f$ messages
from that round before advancing to the next.
This round-based behavior is formalized in the \emph{Heard-Of model}~\cite{HeardOf},
which describes communication through \emph{Heard-Of sets}, the sets of processes
from which each process receives a message in a given round.
The variant with $f$ message omissions, denoted $\ho_f$, corresponds
to executions in which every process hears from at least $n - f$ processes per round.

The $\ho_f$ model is well suited to algorithm design and verification,
partly because its executions form a compact set in a topology based on common
prefixes (proved in Appendix \ref{sec:compactness}).
Compactness implies that every infinite execution has finite,
well-defined prefixes that can be reasoned about independently.
As a result, correctness can be expressed through safety predicates and
locally finite execution trees,
avoiding the need to handle liveness properties explicitly as in $\amp_f$.
This makes the $\ho_f$ model particularly amenable to both
manual proofs and automated verification.





The precise relationship between $\ho_f$ and $\amp_f$ has remained unclear.
In particular, it is not known whether canonical rounds are fully general,
or whether all tasks solvable in $\amp_f$ can be solved also in $\ho_f$.
Indeed, not every protocol in $\amp_f$ conforms to the canonical-round pattern.
For example, the renaming task~\cite{attiya_renaming_amp_1990},
is solved in $\amp_f$ by a protocol in which each process maintains and
shares the set of initial names it knows
and decides once it has received $n - f - 1$ additional copies of this set.
Here, decisions depend on the number of identical messages,
rather than on the number of messages associated with a particular round.
This example highlights that not all asynchronous protocols are expressed in terms of
canonical rounds, motivating a closer examination of the precise boundary
between $\amp_f$ and $\ho_f$.

The most relevant prior work is due to Gafni and Losa~\cite{gafni2024timehealer},
who showed that $\ho_1$ and $\amp_1$
solve the same set of \emph{colorless} tasks for $n > 2$.
(Roughly speaking, in a colorless task, like consensus,
processes can adopt each other's decisions.)
However, to the best of our knowledge, the case of colored tasks and
larger fault thresholds ($f > 1$) has not been explored.

In this work, we complete the characterization of the relative computational power
of $\ho_f$ and $\amp_f$ \emph{for arbitrary $f$}, considering both colorless and colored tasks.



To relate $\ho_f$ and $\amp_f$, we introduce an intermediate model
based on the notion of silenced processes.
A process in $\ho_f$ is considered \emph{silenced} when its influence
eventually stops propagating: only a proper subset of processes continue
to hear from it, directly or indirectly.
The \emph{Silenced-Faulty} model, $\sfho_f$, treats such processes as faulty,
while in $\ho_f$ they are still required to decide.
Thus, establishing the equivalence between $\ho_f$ and $\amp_f$ reduces
to proving the equivalence between $\ho_f$ and $\sfho_f$.
Whether this holds depends on the number of failures and
on whether the task is colorless or colored.

We use bidirectional simulations to prove the following results (see Table~\ref{table:summary}):
\begin{itemize}
	\item For $n > 2f$, $\amp_f$ and $\ho_f$ solve the same set of colorless tasks
	      (Theorem~\ref{thm:colorless}).

	\item For $n > 2$ and $f \le 1$, $\amp_f$ and $\ho_f$ solve the same set of colored tasks (Theorem~\ref{thm:colored_f=1}).

    \item For $f > 1$ and $n > 2f$, the renaming task, which is colored, is solvable in $\amp_f$ but \emph{not} in $\ho_f$ (Theorem~\ref{thm:colored_f>1}).
\end{itemize}

\begin{table}[tb]
	\begin{center}
		\begin{tabular}[]{|c|c|c|c|c|c|}
			\hline
			                    & \shortstack{$0\leq f \leq 1$                                                                                                                                                                                                                           \\colorless} & \shortstack{$0\leq f \leq 1$\\colored}                              & \shortstack{$f > 1$ and $n > 2f$                     \\colorless} & \shortstack{$1 < f < n/2$ \\ colored} & $f \geq n/2$ \\
			\hline
			$\ho_f$ in $\amp_f$ & \cmark{\scriptsize Lemma~\ref{lem:HOinAMP}}    & \cmark{\scriptsize Lemma~\ref{lem:HOinAMP}}       & \cmark{\scriptsize Lemma~\ref{lem:HOinAMP}}     & \cmark{\scriptsize Lemma~\ref{lem:HOinAMP}}       & \cmark{\scriptsize Lemma~\ref{lem:HOinAMP}} \\
			\hline
			$\amp_f$ in $\ho_f$ & \cmark{\scriptsize \cite{gafni2024timehealer}} & \cmark{\scriptsize Theorem~\ref{thm:colored_f=1}} & \cmark{\scriptsize Theorem~\ref{thm:colorless}} & \xmark{\scriptsize Theorem~\ref{thm:colored_f>1}} & unknown                                     \\
			\hline
		\end{tabular}
	\end{center}
	\caption{Conditions under which tasks solvable in $\ho_f$
		are also solvable in $\amp_f$ and vice versa.}
	\label{table:summary}
\end{table}



We also show that the results carry over to randomized
protocols with private local coins and a non-adaptive adversary, i.e.,
one whose behavior is fixed independently of the protocol’s random choices;
see Theorems~\ref{thm:colorless_rand}, \ref{thm:colored_f=1_rand},
and~\ref{thm:colored_f>1:randomized}.
With an adaptive adversary, the separation result still holds because
they are stronger than non-adaptive adversaries;
however, the equivalence results remain open.

%

\paragraph*{Additional related work:}

A long line of work has sought to bridge asynchronous and round-based
message-passing models by placing them within a common framework for
reasoning about protocols and impossibility results.
The Heard-Of model of Charron-Bost and Schiper~\cite{HeardOf},
Gafni’s round-by-round fault detectors~\cite{gafni_round-by-round_1998},
and the unification framework of Herlihy, Rajsbaum, and
Tuttle~\cite{HerlihyRT98} are central to this direction.
The problem of finding a round-based message-passing model equivalent
to the read-write wait-free model was addressed by
Afek and Gafni~\cite{afek2012asynchronysynchrony}.
More recently, it was shown that canonical rounds are not fully general,
when failures are Byzantine~\cite{AttiyaFW25}.
Our work contributes to this line by refining the comparison between
asynchronous and round-based computation and pinpointing
conditions under which equivalence holds.

Another body of research studies how asynchronous message-passing models
can be characterized directly in round-based terms.
Shimi, Hurault, and Quéinnec~\cite{ShimiHQ2021} introduce \emph{delivered predicates}
and show how they generate Heard-Of predicates for asynchronous models,
providing a formal mechanism for deriving round-based behavior from
asynchronous assumptions.
Our approach differs in that we focus on \emph{task solvability}
and use the notion of \emph{silenced processes}
to identify where the Heard-Of abstraction $\ho_f$ diverges from $\amp_f$.

The Heard-Of abstraction has also proved valuable for the verification
of distributed algorithms.
Several works, including~\cite{ShimiHQ2021,DebratMerz12,BalasubramanianW2020,DamianDMW19,dragoi_logic-based_2014}, have used it to reason about
correctness within purely safety-based frameworks.
Balasubramanian and Walukiewicz~\cite{BalasubramanianW2020} formalized
this connection, showing that verification in the Heard-Of model can be
reduced to reasoning over locally finite trees.
Because $\ho_f$ is defined by safety predicates and its executions form
locally finite trees, automated verification is often more tractable
than in $\amp_f$, which also requires liveness properties.
The compactness result we establish for $\ho_f$ provides a theoretical
explanation for this practical advantage.

Beyond verification, other comparisons between message-passing models
have focused on the relationship between failure assumptions.
Analyses such as Gafni’s round-by-round fault detectors~\cite{gafni_round-by-round_1998}
investigate reductions between crash and omission failures,
highlighting algorithmic reductions between models.
Our perspective differs: we study \emph{computational equivalence},
namely, whether two models can solve exactly the same set of tasks,
rather than the existence of specific simulation-based transformations.
This shift in focus clarifies the boundaries between asynchronous
and round-based computation.
In particular, our results refine the Heard-Of framework
by identifying when the abstraction of canonical rounds captures
the behavior of asynchronous message passing, and when it does not.

\section{Models and preliminaries} \label{sec:models}

We consider message-passing models where processes modeled as deterministic automata with infinite state spaces communicate by broadcasting and receiving messages.
A \emph{protocol} is a set of $n$ processes $\mathcal{P} = \left\{ p_1, \dots, p_n \right\}$.
We assume our protocols to be \emph{full-information}, where each process always broadcasts its current \emph{view}, which is initially its initial state and afterwards its complete local history.
A configuration of the system consists of the local states of all the processes and also the state of the environment, e.g., messages in transit, etc.
A \emph{schedule} describes the order in which processes broadcast and receive messages, and which processes' messages are obtained during each receive.
An initial configuration $I$ and a schedule $S$ determine a \emph{run} $\alpha(I, S)$ of the system which is an alternating sequence of configurations and broadcast or receive events.
Each model describes a set of possible schedules and for each schedule,
a set of processes that are \emph{faulty}.

\subsubsection*{The $\amp_f$ model}

In the \emph{Asynchronous Message-Passing Model with $f \geq 0$ process failures}
($\amp_f$), there is no fixed upper bound on the time it takes for
a message to be delivered nor on the relative speeds of the processors.
Processes proceed in atomic steps, where they execute a receive, then a local computation and finally a broadcast.
In a broadcast event, denoted as $\operatorname{amp-bc}_i(m)$, process $p_i$ broadcasts message $m$ to all processes including itself.
In a receive event, denoted $\operatorname{amp-recv}_i(m)$, process $p_i$ receives message $m \in M \cup \left\{ \bot \right\}$ where $M$ is the message domain and $\bot$ denotes ``no message".

There are multiple definitions for $\amp$; we present one that
is equivalent to the classical definition by Fischer, Lynch,
and Patterson~\cite{FLP} (see Appendix~\ref{sec:AMP=FLP}).
A process that has infinitely many steps in a run is said to be \emph{non-faulty}, and \emph{faulty} otherwise.
There can be at most $f \geq 0$ faulty processes.
Messages broadcast by a non-faulty process are received by all non-faulty processes (Non-faulty Liveness).
All messages \emph{but} the last broadcast by a faulty process are received by all non-faulty processes (Faulty Quasi-Liveness).
Every message received by a process was previously sent to the process (Integrity).
No message is received more than once at any process (No Duplicates).

\subsubsection*{The $\ho_f$ model} \label{HO_f}

The Heard-Of model~\cite{HeardOf} is a round-based model.
Here we consider one of its instances, the \emph{Heard-Of model with $f \geq 0$ message omissions}.
In each round, every process does a broadcast, then a receive, and finally a local computation.
In a broadcast event, denoted $\operatorname{ho-bc}_i(m)$, process $p_i$ broadcasts $m$ to all processes including itself.
In a receive event, denoted $\operatorname{ho-recv}_i(M')$, it gets a subset $M'$ of the messages sent to it during the round.
Messages missed in a round are lost forever.

Conceptually, processes proceed in lockstep; to represent the computation in the form of a sequence, we impose a total order on the events that is consistent with the order of events at each process, e.g., the round-robin pattern \cite[Section 11.1]{attiyawelch}.
In each round, a process may lose at most~$f$ messages from other processes;
it always hears from itself (Weak Liveness).
Every message received by a process in round $k$ must have been broadcast in round $k$ by some process (Integrity).
A receive event can contain at most one message from each neighbor (No Duplicates).
We remark that no process is defined to be faulty.

\subsubsection*{Tasks}

In a run of a protocol,
every process $p_i$ has a non-$\bot$ \emph{input} value
in its initial state and \emph{decides} on
an \emph{output} value \emph{at most once}.
If a process does not decide, its output is denoted by $\bot$.

A \emph{task} is a tuple $T = (\mathcal{I}, \mathcal{O}, \Delta)$ where $\mathcal{I}$ is a set of input vectors (one input value for each process), $\mathcal{O}$ is a set of output vectors (one output value for each process), and $\Delta$ is a total relation that associates each input vector in $\mathcal{I}$ with a set of possible output vectors in $\mathcal{O}$.
An output value of $\bot$ denotes an \emph{undecided} process.
We require that if $(I, O) \in \Delta$, then for each $O'$ resulting after replacing some items in $O$ with $\bot$, $(I, O') \in \Delta$.

In a \emph{colorless} task, processes are free to copy the inputs and outputs of other processes.
Formally, let $val(U)$ denote the \emph{set} of non-$\bot$ values in a vector $U$.
In a colorless task, for all input vectors $I, I'$ and all output vectors $O, O'$ such that $(I, O) \in \Delta$, $val(I) \subseteq val(I')$, and $val(O') \subseteq val(O)$, we have $(I', O) \in \Delta$ and $(I, O') \in \Delta$.

A task is \emph{colored} if is not colorless.

A protocol $\calP$ \emph{solves} $T$ in model $\mathcal{M}$, if in every run $R$ of $\calP$ with input vector $I$ and output vector $O$, we have (1) $(I, O) \in \Delta$ and (2) $O[i] = \bot$ only if $p_i$ is faulty in $\mathcal{M}$ in $R$.
Finally, task $T$ is \emph{solvable} in model $\mathcal{M}$ if there exists a protocol $\calP$ that solves $T$ in $\mathcal{M}$.

\subsubsection*{Simulations} \label{sec:simulations}

The definition of simulations follows~\cite[Chapter~7]{attiyawelch}.
A simulation of model $\mathcal{M}_2$ in model $\mathcal{M}_1$ consists of three layers:
(1) $n$ \emph{simulated processes} $p_1, \dots, p_n$,
(2) $n$ \emph{simulating machines} $P_1, \dots, P_n$, with machine~$P_i$ assigned to simulated process~$p_i$, and
(3) the communication system of~$\mathcal{M}_1$ (see Figure~\ref{fig:simulation_system}).
Each simulated process~$p_i$ interacts with its simulating machine~$P_i$ via~$\calM_2$-communication primitives, as if the communication system is that of~$\calM_2$.
The simulating machines interact with each other through~$\calM_1$ using~$\calM_1$-communication primitives.
Transitions between states in simulated processes and simulating machines are triggered by the occurrence of events in~$\calM_1$ and~$\calM_2$.
The occurrence of an event between~$p_i$ and~$P_i$ entails a transition in both~$p_i$ and~$P_i$.
A simulation is described by specifying the automaton for the simulating machines.

	\begin{figure}[bt]
		\centering
		\begin{minipage}[]{0.4\textwidth}
			\centering
			\begin{tikzpicture}
				\draw[->, >=latex] (-0.8, 0) -- (-0.8, -0.8) node[midway, left] {$\calM_2$-bc};
				\draw[->, >=latex] (0.8, -0.8) -- (0.8, 0) node[midway, right] {$\calM_2$-recv};
				\node at (0, 0.4) {simulated process $p_i$};
				\node at (0, -1.1) {simulating machine $P_i$};
				\node at (0, -2.5) {communication system $\calM_1$};
				\draw[->, >=latex] (-0.8, -1.4) -- (-0.8, -2.2) node[midway, left] {$\calM_1$-bc};
				\draw[->, >=latex] (0.8, -2.2) -- (0.8, -1.4) node[midway, right] {$\calM_1$-recv};
			\end{tikzpicture}
			\caption{Simulating $\calM_2$ in $\calM_1$}
			\label{fig:simulation_system}

		\end{minipage}\hspace{0.8cm}
		\begin{minipage}[]{0.5\textwidth}
			\centering
			\begin{tikzpicture}
				\node at (1.25, 2.55)  {$p_i$};
				\node at (1.25, 1.05) {$P_i$};
				\node at (1.25, 3.2) {$q_i$};
				\draw (0.1, 0.55) rectangle (2.4, 3.0);
				\fill[color=gray, opacity=0.6] (0.3, 0.7) rectangle (2.2, 1.4);
				\fill[color=gray, opacity=0.6] (0.3, 2.2) rectangle (2.2, 2.9);
				\draw[->,>=latex] (0.7, 2.2) -- +(0, -0.8);
				\draw[->,>=latex] (1.8, 1.4) -- +(0, 0.8);
				\draw[->,>=latex] (0.7, 0.7) -- +(0, -0.8);
				\draw[->,>=latex] (1.8, -0.1) -- +(0, 0.8);

                \node at (-0.2, 0.3) {amp-bc};
                \node at (-1.2, 1.8) {ho-bc(internal)};
				\node at (2.9, 0.3) {amp-recv};
				\node at (3.9, 1.8) {ho-recv(internal)};
			\end{tikzpicture}
			\caption{Constructing $q_i$ from $p_i$ and $P_i$}
			\label{fig:simulation_to_reduction}
		\end{minipage}
	\end{figure}

    A \textit{configuration} of the system consists of the states of all the simulated processes and simulating machines and the state of the network.
    A schedule of a simulation is defined similarly to the schedule of a protocol, except it now involves the primitives of both~$\mathcal{M}_1$ and~$\mathcal{M}_2$.
    A \textit{run} of a simulation consists of an initial configuration~$I$ and a schedule~$\alpha$ such that the events in~$\alpha$ are enabled in turn starting from~$I$.

    Given run~$\alpha$, we define the \textit{simulated run} $top(\alpha)$ by restricting the initial configuration of~$\alpha$ to the simulated processes and the schedule of~$\alpha$ to the events of the upper interface.

\section{Simulating $\ho_f$ in $\amp_f$} \label{sec:HOinAMP}

In this section, we show that tasks solvable in $\ho_f$ are solvable in $\amp_f$,
with a simple simulation showing that an $\ho_f$ protocol can be executed in $\amp_f$,
as a consequence of the canonical-round construction technique in $\amp$.
This helps to prove one direction of
Theorems~\ref{thm:colorless} and~\ref{thm:colored_f=1}.

\begin{lemma}
	\label{lem:HOinAMP}
	Any task that is solvable in $\ho_f$ is also solvable in $\amp_f$ for $0 \le f \le n$.
\end{lemma}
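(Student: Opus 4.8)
The plan is to build an $\amp_f$ protocol that simulates the given $\ho_f$ protocol round by round, exploiting the well-known canonical-round construction. Each simulating machine $P_i$ maintains a current round number $k$, initialized to $1$. To simulate round $k$, machine $P_i$ first performs the $\ho_f$-send on behalf of $p_i$ by broadcasting, via $\operatorname{amp-send}_i$, the round-$k$ message tagged with the round number $k$. It then repeatedly executes $\operatorname{amp-recv}_i$ events, collecting all round-$k$ messages it receives, until it has accumulated messages from exactly $n-f$ distinct processes (counting its own message, which it always receives). At that point it delivers this set of messages to the simulated process $p_i$ as the outcome of its $\operatorname{ho-recv}_i(M')$ event, allows $p_i$ to perform its local computation for round $k$, and then advances to round $k+1$.

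The key step is to argue that this simulation never deadlocks and that the resulting sequence of Heard-Of sets constitutes a legal $\ho_f$ execution. First I would check liveness: I want to show that every correct simulating machine eventually collects $n-f$ round-$k$ messages for each $k$, so that every round is completed and every machine proceeds forever. This follows by induction on $k$ from Non-faulty Liveness, since at least $n-f$ processes are non-faulty (as there are at most $f$ faulty ones), every non-faulty process reaches round $k$ and broadcasts its round-$k$ message, and by Non-faulty Liveness each such message is eventually received by every non-faulty process; hence each correct machine eventually receives at least $n-f$ round-$k$ messages and unblocks. Integrity and No Duplicates in $\amp_f$ guarantee that the collected messages are genuine round-$k$ messages and that no message is double-counted, so the delivered set is a valid Heard-Of set of size at least $n-f$ containing the process itself.

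Next I would verify that the induced Heard-Of collection respects the $\ho_f$ constraints: each process hears from at least $n-f$ processes per round (Weak Liveness, equivalently at most $f$ omissions), always hears from itself, and every delivered message was sent in the same round. Since the simulating machine stops as soon as it has $n-f$ messages, it may drop later-arriving round-$k$ messages, which corresponds exactly to the adversary choosing up to $f$ omissions for that process in that round; thus the simulated schedule is an admissible $\ho_f$ schedule. Because the $\ho_f$ protocol solves the task $T$ along every admissible $\ho_f$ execution, the simulated run produces an output vector $O$ with $(I,O)\in\Delta$, and every correct process decides; translating back, every non-faulty $\amp_f$ process decides the same value, so the composed protocol solves $T$ in $\amp_f$.

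The main obstacle is the matching of fault semantics between the two models. In $\ho_f$ no process is ever declared faulty and every process is required to decide, whereas in $\amp_f$ up to $f$ processes may crash and are permitted to remain undecided. I need to argue that when a simulating machine crashes in $\amp_f$, the effect on the remaining machines is subsumed by the at-most-$f$ omissions already permitted in each round of $\ho_f$: a crashed process simply stops contributing round-$k$ messages from some round onward, which the surviving machines absorb as omissions while still meeting their $n-f$ quota by the liveness argument above. The subtle point, handled by Faulty Quasi-Liveness, is that a faulty process may still have sent and had delivered all but its final message, so its earlier rounds are simulated faithfully; only from its crash round onward does it appear as a silenced source, and this is consistent with the $\ho_f$ bound $n>2f$ ensuring enough non-faulty processes survive. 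Since $\ho_f$ requires even such eventually-silenced processes to decide but they correspond precisely to the $\amp_f$-faulty processes that are allowed to be undecided, the task-solving guarantee transfers without obligation on the crashed processes, completing the argument.
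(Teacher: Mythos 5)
Your simulation is the same as the paper's (Algorithm~\ref{algo:HOinAMP}: tag each message with its round number, wait for $n-f$ messages of the current round before delivering), and your induction showing that every non-faulty machine completes every round is sound. The genuine gap is in the step where you invoke the given protocol's correctness. You claim that ``the simulated schedule is an admissible $\ho_f$ schedule,'' but this is false as soon as one simulating machine crashes: in $\ho_f$ every process takes a send and a receive step in \emph{every} round, forever, and is required to decide, whereas the simulated process of a crashed machine has only finitely many events. Since the correctness guarantee of the given protocol applies only to legal $\ho_f$ executions, you cannot conclude $(I,O)\in\Delta$ for the simulated run; your final paragraph names this obstacle correctly but resolves it by fiat (``the task-solving guarantee transfers''). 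What is needed, and what the paper supplies via its intermediate Crashed-Faulty model $\cfho_f$, is an explicit construction of a legal $\ho_f$ execution that certifies the survivors' outputs: extend the simulated run so that each crashed process keeps taking steps in every round, with its messages lost to all other processes, while it hears from itself and from all non-crashed processes. This is legal because at most $f$ machines crash, so every process still misses at most $f$ messages per round. In this completed run the non-crashed processes have exactly the same local histories as in the simulation, hence the same outputs; the $\ho_f$ protocol produces on it a full output vector $O^*$ with $(I,O^*)\in\Delta$, and the simulation's output vector is obtained from $O^*$ by replacing the crashed processes' entries with $\bot$, which stays in $\Delta$ by the task's $\bot$-closure property. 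Without this construction, the transfer of the solvability guarantee across the fault-semantics mismatch---which is the actual content of the lemma---is unjustified.

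Two smaller points. First, your appeal to ``the $\ho_f$ bound $n>2f$'' is spurious: the lemma is stated for all $0\le f\le n$, no such bound is assumed, and none is needed---the completion argument above only uses that at most $f$ machines crash. Second, your parenthetical that a machine ``always receives'' its own message is not guaranteed by $\amp_f$ to happen before it has collected $n-f$ messages from others; to secure Weak Liveness of the simulated run (every process hears from itself in every round), the machine should insert its own round-$k$ message into its collection locally rather than rely on the network delivering it in time.
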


\begin{proof}
	We first simulate a variant of the $\ho_f$ model, called $\cfho_f$ (for \emph{Crashed-Faulty} $\ho_f$), in $\amp_f$ and then show the inclusion of $\ho_f$ in $\cfho_f$.
	The $\cfho_f$ model is identical to $\ho_f$ except that we allow a subset of processes to crash, i.e., have a finite number of events.
    Crashed processes do not need to decide as they are considered faulty processes.
    The number of messages that a non-crashed process does not receive in a round due to process crashes and message omissions is at most $f$.

	Suppose a protocol $\calP$ solves a task $T$ in $\cfho_f$.
	Each round of $\cfho_f$ can be emulated in $\amp_f$
	by waiting for $n-f$ messages before proceeding,
	preserving the causal structure of message delivery.
	We can construct a simulation system that simulates $\cfho_f$ in $\amp_f$
	so that for $1 \le i \le n$,
	(1) the $i$-th simulated process is $p_i$ and
	(2) the $i$-th simulation machine $P_i$ is specified
	by Algorithm~\ref{algo:HOinAMP}.

	\begin{algorithm}[bt]
		\caption{Pseudocode for machine $P_i$}
		\label{algo:HOinAMP}
		\begin{algorithmic}[1]
			\State Initialize counter $round \gets 0$
            \State Initialize empty sets $received[r]$ for all $r \ge 0$
			\medskip
            \State{When $\operatorname{ho-bc}_i(m)$ occurs:}
			\IndState $round \gets round + 1$
            \IndState Add $m$ to $received[round]$ \Comment{Ensure that $p_i$ hears from itself in $\ho_f$}
            \IndState Enable $\operatorname{amp-bc}_i(\left\langle m, round \right\rangle)$ \label{algo:enable_amp-sends}
			\medskip
			\State{When $\operatorname{amp-recv}_i(\left\langle m, r \right\rangle)$ occurs:}
			\IndState Add $m$ to $received[r]$
			\medskip
			\State{Enable $\operatorname{ho-recv}_i(received[round])$ when:}
			\IndState $\abs{received[round]} \ge n - f$
		\end{algorithmic}
	\end{algorithm}

    A protocol $\calQ$ solving task $T = (\mathcal{I}, \mathcal{O}, \Delta)$ in $\amp_f$ can be constructed as in Figure~\ref{fig:simulation_to_reduction} by converting the simulation system into a network of automata $\calQ = \left\{ q_1, \dots, q_n \right\}$ such that $q_i$ runs both $p_i$ and the $i$-th simulation machine $P_i$ internally, and interacts with the communication system of $\amp_f$ using amp-bc and amp-recv events.
    The ho-bc and ho-recv events between $p_i$ and $P_i$ become part of $q_i$'s internal computation.

	The simulator $q_i$ works as follows.
	The input of $q_i$ is also used as the input for $p_i$.
    When $p_i$ does an internal $\operatorname{ho-bc}(m)$ for round $r$ to $P_i$, $q_i$ does an $\operatorname{amp-bc}(\left\langle m, r \right\rangle)$ and waits to receive at least $n - f$ messages of the form $\left\langle \cdot, r \right\rangle$,
	and collects their first components in a set $M'$.
	Then $q_i$ performs an internal $\operatorname{ho-recv}(M')$ from $P_i$ to $p_i$ and changes $p_i$'s state accordingly.
	If $p_i$ internally decides an output $o$, then $q_i$ decides $o$ as well.

    Since machine $P_i$ waits to receive $n - f$ messages tagged with its current round counter before incrementing it, $p_i$ receives $n - f$ messages in every round of a simulated run before broadcasting in the next round.
    The Integrity and No Duplicates properties are straightforward to verify and thus the simulated run is locally valid at every process, though it need not be globally valid.
    Indeed, due to the asynchronous nature of $\amp_f$, machine $P_i$ could receive $n - f$ round-$r$ tagged messages much earlier than machine $P_j$ does, meaning $p_i$ could then move on to do a round-$(r + 1)$ ho-bc before $p_j$ has a ho-recv for round $r$.
    However once the local history of events at the processes are correctly interleaved to follow the round-robin pattern, the rearranged simulated run is valid in $\cfho_f$.
    Thus the vector of outputs (which only depends on the local histories) of all the internal~$p_i$ and thus all~$q_i$ is valid for the input vector.

    We now claim that if $q_i$ is non-faulty in a run of $\amp_f$ then $p_i$ does not crash in the corresponding simulated run.
    To see this, let $\calQ_{\mathrm{NF}}$ be the set of processes in $\calQ$ that are non-faulty in the run of $\amp_f$.
    Let $q_i \in \calQ_{\mathrm{NF}}$.
    In round 0, $q_i$ does an internal ho-bc from $p_i$ and does a corresponding round-0 tagged amp-bc.
    As $\abs{\calQ_{\mathrm{NF}}} \ge n - f$, there are $n - f$ such amp-bc events.
    Thus $q_i$ receives $n - f$ round-0 tagged messages in $\amp_f$ and enables a ho-recv from $P_i$ to $p_i$; in turn, $p_i$ enables a ho-bc for the next round.
    Thus for every $q_i \in \calQ_{\mathrm{NF}}$, process $p_i$ does a ho-bc for round 1.
    We can now continue our argument in an inductive fashion to conclude that $p_i$ does not crash in the simulated run, which proves the claim.

    By the above claim, every non-faulty $q_i$ receives a non-$\bot$ output from $p_i$ and hence decides.
    Thus protocol $\calQ$ solves task $T$ in $\amp_f$, showing that any task that is solvable in $\cfho_f$ is also solvable in $\amp_f$ for $0 \le f \le n$.

    Finally, suppose that a protocol $\calP$ solves a task $T$ in $\ho_f$.
    We show that $\calP$ can also solve $T$ in $\cfho_f$.
    Consider a  run $\gamma$ of $\calP$ in $\cfho_f$.
    We can consider a run $\gamma'$ of $\calP$ in $\ho_f$ where every process that crashes in $\gamma$, say at the beginning of some round $r$, does not crash in $\gamma'$ but instead from round $r$,
    (1) its messages are still lost to all other processes and (2) it hears from all non-crashed processes in~$\gamma$ (by Weak Liveness, there are at least $n - f$ such processes).
    By definition of $\ho_f$, all processes decide in~$\gamma'$.
    We show that (i) all processes that decide in~$\gamma$ output the same value they decide in~$\gamma'$, meaning that the output vector of~$\gamma$ is valid and (ii) all non-faulty processes in~$\gamma$ decide.
    For (i), if process $p$ crashes in round $r$ in $\gamma$, as $\gamma$ and $\gamma'$ are indistinguishable till round $r$ to $p$, it will decide a value in $\gamma$ if and only if it decides the same value before round $r$ in $\gamma'$.
    If $p$ does not crash in $\gamma$, as $\gamma$ and $\gamma'$ are indistinguishable, it decides in $\gamma$ the same values that it necessarily decides in $\gamma'$.
    This also implies (ii).
\end{proof}

\section{From $\amp_f$ to $\ho_f$} \label{sec:converse}

We now examine the conditions under which tasks solvable in $\amp_f$
are solvable in $\ho_f$.
Instead of directly simulating $\amp_f$ in $\ho_f$,
we introduce the \emph{Silenced-Faulty Heard-Of} ($\sfho_f$) model,
a variant of $\ho_f$, which simplifies our proofs.
The proof strategy for the equivalence results is outlined in Figure~\ref{fig:proof_strategy}.

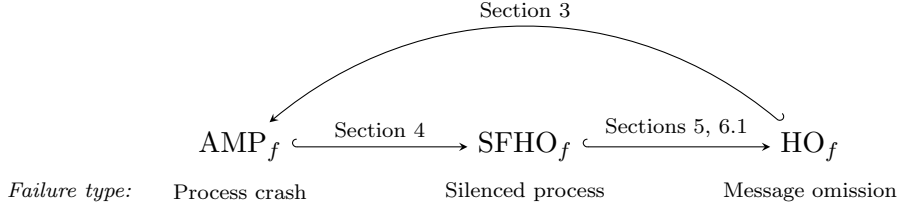
\begin{figure}[tb]
	\centering
	\begin{tikzpicture}[scale=1.5]
		\node (X) at (-1.5, -0.4) {\scriptsize\itshape Failure type:};
		\node (A) at (0, 0) {$\amp_f$};
		\node (A1) at (0, -0.4) {\scriptsize Process crash};
		\node (B) at (2.5, 0) {$\sfho_f$};
		\node (B1) at (2.5, -0.4) {\scriptsize Silenced process};
		\node (C) at (5.0, 0) {$\ho_f$};
		\node (C1) at (5.0, -0.4) {\scriptsize Message omission};
		\draw[right hook-stealth] (A) edge node[above] {\scriptsize Section~\ref{sec:converse}} (B);
		\draw[right hook-stealth] (B) edge node[above] {\scriptsize Sections~\ref{subsec:colorless},~\ref{subsec:colored_f=1}} (C);
		\draw[left hook-stealth] (C) edge[bend right=40] node[above] {\scriptsize Section~\ref{sec:HOinAMP}} (A);
	\end{tikzpicture}
	\caption{Strategy for equivalence results; $\mathcal{M}_1 \hookrightarrow \mathcal{M}_2$ denotes a simulation of $\mathcal{M}_1$ in $\mathcal{M}_2$}
	\label{fig:proof_strategy}
\end{figure}

\subsection{The Silenced-Faulty Heard-Of model} \label{sec:sfho}

The schedules in $\sfho_f$ must satisfy exactly the same conditions as they do in $\ho_f$ (Round-robin property, Integrity, No Duplicates, Weak Liveness).
The difference between the models is that we can also declare a process to be faulty in $\sfho_f$.
We first define the notion of a \emph{silenced} process.

The \emph{reach} of a process $p_i$ from round $r$ can be defined as the union of
the set of processes that hear from $p_i$ in round $r$,
the set of processes that hear from one of these processes in round $r + 1$,
the set of processes that hear from one of the previous processes (of rounds $r$ and $r + 1$)
in round $r + 2$ and so on.
Intuitively, the reach captures the transitive closure of message deliveries
originating from $p_i$ until a particular round.
Formally,
\begin{definition}
	\label{defn:reach}
	Let $\textsc{Rcv}_i(r)$ be the set of processes that
	receive a message from process $p_i$ in round $r$.
	The \emph{reach} of process $p_i$ between rounds $r$ and $s \ge r$,
	denoted $\textsc{Reach}_i(r, s)$, is defined as
	\[
		\textsc{Reach}_i(r, s) \stackrel{def}{=}
		\begin{cases}
            \left\{ p_i \right\},                                                                            & \text{if $s = r$} \\
            \bigcup_{p_j \in \textsc{Reach}_i(r, s - 1)} \textsc{Rcv}_j(s - 1), & \text{if $s > r$.}
		\end{cases}
	\]
	The reach $\textsc{Reach}_i(r, \infty)$ of $p_i$ starting from round $r$ is $\bigcup_{s \ge r} \textsc{Reach}_i(r, s)$.
\end{definition}

\begin{definition}
	A process $p_i$ is \emph{silenced from round $r$}
	if $\abs{\textsc{Reach}_i(r, \infty)} \le f$.
    We say that $p_i$ is \emph{silenced} if there exists $r$ such that $p_i$ is silenced from round $r$.
\end{definition}

\begin{definition}
	\label{defn:faulty_in_sfho}
	A process is \emph{faulty} in a run $\sigma$ in $\sfho_f$ if
	it is silenced in $\sigma$.
\end{definition}

We make the following simple but crucial observation about non-silenced
processes for $n > f$, which also explains the term ``silenced''.
If a process $p_i$ is not silenced, then for any round $r$,
there is a round $s \ge r$ such that $\abs{\textsc{Reach}_i(r, s)}
	= \abs{\textsc{Reach}_i(r, \infty)} \ge f + 1$.
This implies that in round $s+1$
every process hears from some process in $\textsc{Reach}_i(r, s)$,
which implies:

\begin{lemma}
	\label{lem:all_see_non_sil}
    If $n > f$ then $p_i$ is not silenced if and only if for any round $r$,
	$\abs{\textsc{Reach}_i(r, \infty)} = n$.
\end{lemma}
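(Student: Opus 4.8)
The plan is to prove the biconditional in Lemma~\ref{lem:all_see_non_sil} by establishing each direction separately, using the observation stated just before the lemma as the engine for the forward direction.

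For the easy direction, suppose that for every round $r$ we have $\abs{\textsc{Reach}_i(r, \infty)} = n$. Since $n > f$, this gives $\abs{\textsc{Reach}_i(r, \infty)} = n > f$, so $p_i$ is not silenced from any round $r$, and hence not silenced by definition. This direction is immediate from $n > f$ and requires no further work.

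For the substantive direction, I would prove the contrapositive in the sharp form: if $p_i$ is not silenced, then $\abs{\textsc{Reach}_i(r, \infty)} = n$ for every round $r$. Fix an arbitrary round $r$. Since $p_i$ is not silenced, it is not silenced from round $r$, so $\abs{\textsc{Reach}_i(r, \infty)} \ge f + 1$. The key is to bootstrap from $f+1$ up to $n$. First I would record the monotonicity facts implicit in Definition~\ref{defn:reach}: the sets $\textsc{Reach}_i(r, s)$ are nondecreasing in $s$ (because each process always hears from itself by Weak Liveness, so $\textsc{Rcv}_j(s-1) \ni p_j$ and thus $\textsc{Reach}_i(r, s-1) \subseteq \textsc{Reach}_i(r, s)$). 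Because the ambient process set is finite, the increasing union $\textsc{Reach}_i(r, \infty)$ is attained at some finite round $s$, i.e. there exists $s \ge r$ with $\abs{\textsc{Reach}_i(r, s)} = \abs{\textsc{Reach}_i(r, \infty)} \ge f + 1$; this is exactly the statement quoted before the lemma. Now I would invoke the stabilization: at this $s$, the set $\textsc{Reach}_i(r, s)$ has already reached its limit, so $\textsc{Reach}_i(r, s+1) = \textsc{Reach}_i(r, s)$. But $\textsc{Reach}_i(r, s+1) = \bigcup_{p_j \in \textsc{Reach}_i(r, s)} \textsc{Rcv}_j(s)$, and since $\abs{\textsc{Reach}_i(r, s)} \ge f + 1 > f$, in round $s$ every process loses at most $f$ messages and therefore must hear from at least one process in this set of size $\ge f+1$; hence every process lies in $\textsc{Reach}_i(r, s+1)$. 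This forces $\textsc{Reach}_i(r, s+1)$ to be all $n$ processes, and by stabilization $\textsc{Reach}_i(r, \infty) = \textsc{Reach}_i(r, s) = \textsc{Reach}_i(r, s+1)$ has size $n$.

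The main obstacle is making the ``every process hears from some process in a set of size $> f$'' step fully rigorous. The subtlety is that the Weak Liveness bound limits how many messages a single receiver can miss ($\le f$), so a receiver cannot simultaneously miss all $\ge f+1$ senders in $\textsc{Reach}_i(r, s)$; I would phrase this as: for any process $p_k$, the set of senders whose round-$s$ message $p_k$ misses has size at most $f$, so if $\abs{\textsc{Reach}_i(r, s)} \ge f+1$ then $p_k$ receives from at least one member, placing $p_k \in \textsc{Rcv}_{j}(s)$ for some $p_j \in \textsc{Reach}_i(r,s)$ and hence $p_k \in \textsc{Reach}_i(r, s+1)$. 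One point to verify carefully is that stabilization at $s$ really does persist, i.e. that reaching the full limit value $\abs{\textsc{Reach}_i(r,\infty)}$ at round $s$ means no new processes enter at round $s+1$; this is where the definition of $\textsc{Reach}_i(r,\infty)$ as the union over all $s$ is used, together with monotonicity, to conclude $\textsc{Reach}_i(r, s) = \textsc{Reach}_i(r, s+1) = \textsc{Reach}_i(r, \infty)$, which then pins the common value to $n$.
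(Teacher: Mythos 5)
Your proof is correct and follows essentially the same route as the paper: the paper's argument is exactly the observation that a non-silenced process has $\abs{\textsc{Reach}_i(r, s)} = \abs{\textsc{Reach}_i(r, \infty)} \ge f+1$ at some finite stabilization round $s$, after which Weak Liveness (each process misses at most $f$ messages per round, and always hears itself) forces every process into the reach. You merely make explicit the monotonicity and finite-stabilization steps, and the trivial converse direction, that the paper leaves implicit; your round indexing is in fact slightly more faithful to Definition~\ref{defn:reach} than the paper's prose.
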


When $n > 2f$, another important observation is that
at most $f$ processes are silenced.

\begin{lemma}
	\label{lem:at_most_f_silenced}
	If $n > 2f$, there are at most $f$ silenced processes.
\end{lemma}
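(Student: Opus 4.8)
The plan is to argue by contradiction: assume a set $S$ of at least $f+1$ processes is silenced and derive a violation of Weak Liveness (or of $n > 2f$). First I would record two monotonicity facts about reach. Since every process always hears from itself (Weak Liveness), one checks by induction on $s$ that $p_i \in \textsc{Reach}_i(r,s)$ for all $s \ge r$, and more generally that $p_j \in \textsc{Reach}_i(r,s)$ implies $\textsc{Reach}_j(s,s') \subseteq \textsc{Reach}_i(r,s')$ for all $s' \ge s$. Taking $p_j = p_i$ shows that silencing is monotone in the starting round: if $p_i$ is silenced from $r_i$ then it is silenced from every $r \ge r_i$. As there are finitely many processes, I can therefore fix a single round $r^*$ from which every process in $S$ is silenced, so $\abs{\textsc{Reach}_i(r^*,\infty)} \le f$ for all $p_i \in S$. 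The same self-hearing fact makes each sequence $s \mapsto \textsc{Reach}_i(r^*,s)$ non-decreasing; being bounded by $f$ it stabilizes, so I can also fix a round $s^*$ beyond which $\textsc{Reach}_i(r^*,s) = R_i := \textsc{Reach}_i(r^*,\infty)$ for every $p_i \in S$.

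The heart of the argument is to show that $W := \bigcup_{p_i \in S} R_i$ equals $S$ and is \emph{closed} from round $s^*$, meaning that for $s \ge s^*$ no process outside $W$ hears from any process inside $W$. Closure of a single $R_i$ is immediate from stabilization: $\textsc{Reach}_i(r^*,s+1) = \bigcup_{p_j \in R_i}\textsc{Rcv}_j(s) = R_i$ for $s \ge s^*$, so the recipients of $R_i$-messages stay inside $R_i$; taking the union over $S$ gives closure of $W$. To see $W \subseteq S$, I combine the containment fact above with Lemma~\ref{lem:all_see_non_sil}: if $p_j \in R_i$ then $p_j \in \textsc{Reach}_i(r^*,s)$ for some $s$, hence $\textsc{Reach}_j(s,\infty) \subseteq R_i$ has size at most $f < n$, so by Lemma~\ref{lem:all_see_non_sil} $p_j$ is itself silenced. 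Thus $R_i \subseteq S$ for every $i$, and since trivially $S \subseteq W$ (each $p_i \in R_i$), we conclude $W = S$.

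Finally I would derive the contradiction from $\abs{S} \ge f+1$. If $S \ne \mathcal{P}$, pick any $p_k \notin S = W$; closure of $W$ at round $s^*$ forces $p_k$ to hear from no process of $S$ in that round, \ie, to miss at least $\abs{S} \ge f+1$ processes, contradicting Weak Liveness. If instead $S = \mathcal{P}$, then every process is silenced from $r^*$, so for $s \ge s^*$ each message set $\textsc{Rcv}_i(s) \subseteq R_i$ is heard by at most $f$ processes; counting $(\text{sender},\text{receiver})$ incidences gives at most $nf$, whereas Weak Liveness forces at least $n(n-f)$, contradicting $n > 2f$. I expect the main obstacle to be the step $W \subseteq S$: the naive idea that each silenced reach has size $\le f$ and hence the union is small is simply false, and the plain incidence count on its own yields only the weak bound $\abs{S} \le nf/(n-f)$. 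It is therefore essential to prove that the reach of a silenced process contains \emph{only} silenced processes, which is exactly what upgrades the closure argument to the tight threshold $f$.
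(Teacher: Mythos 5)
Your proof is correct and follows essentially the same route as the paper's: the key step that the reach of a silenced process consists only of silenced processes (via Lemma~\ref{lem:all_see_non_sil}), the eventual closure of the silenced set, the global $nf$ versus $n(n-f)$ count when every process is silenced, and the per-round Weak Liveness bound of $f$ missed messages. You merely reorganize it as a contradiction with a case split (and add explicit monotonicity/stabilization details the paper leaves implicit), which is a presentational rather than substantive difference.
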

\begin{proof}
	Let $Sil$ be the set of silenced processes.
	Suppose that $p_i \in Sil$ is silenced from round $r_i$.
	Then all the processes in $\textsc{Reach}_i(r_i, \infty)$ must be silenced.
	Indeed, if some process $p_j \in \textsc{Reach}_i(r_i, \infty)$ is not, then by Lemma~\ref{lem:all_see_non_sil}, $\abs{\textsc{Reach}_j(r', \infty)} =~n$ for some $r' > r_i$ and so $\abs{\textsc{Reach}_i(r_i, \infty)} = n$, contradicting our assumption that $p_i$ is silenced.

    By the definition of reach, there is a round after which no message broadcast by a process in $\textsc{Reach}_i(r_i, \infty)$ is received by a process outside the set.
    As we have shown for any $p_i \in Sil$ that $\textsc{Reach}_i(r_i, \infty) \subseteq Sil$, there is a round after which no message broadcast by a process in $Sil$ is received by a process outside~$Sil$.

	We next argue that when $n > 2f$, at least one process is not silenced.
	Otherwise, all processes are silenced by some round $R$.
	In any round $r > R$, a message broadcast by any process is received
	by at most $f$ processes in round $r$.
	So the total number of successfully received messages in round $r$
	is at most $nf$.
	However, by Weak Liveness, this number is at least $n(n - f)$,
	which is a contradiction as $n - f > f$.

	Let $p_k$ be a non-silenced process, that is, $p_k \notin Sil$.
	Since $p_k$ misses at most $f$ messages from other processes
	in any round and eventually does not hear from processes in $Sil$,
	it follows that $\abs{Sil} \le f$;
	that is, there are at most $f$ silenced processes.
\end{proof}

\subsection{From $\amp_f$ to $\sfho_f$} \label{sec:AMPinSFHO}


\begin{lemma}
	\label{lem:amp_in_sfho}
	If $n > 2f$, then any task that is solvable in $\amp_f$
	is also solvable in $\sfho_f$.
\end{lemma}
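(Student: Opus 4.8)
The plan is to establish the simulation $\amp_f \hookrightarrow \sfho_f$ directly, as indicated in Figure~\ref{fig:proof_strategy}: given a protocol $\calP$ solving $T$ in $\amp_f$, I build simulating machines $P_i$ running in $\sfho_f$ so that each simulated process $p_i$ executes $\calP$ against a faithfully reconstructed $\amp_f$ communication system. Each $P_i$ is full-information: in every $\sfho_f$ round it rebroadcasts its entire buffer of known amp-messages (tagged with sender and a per-sender sequence number) together with acknowledgements recording, for each amp-message, the set of processes it has observed to hold that message. When $p_i$ performs an internal $\operatorname{amp-send}(m)$, the machine inserts $m$ into its buffer; when $P_i$ learns a new amp-message it feeds it to $p_i$ as an $\operatorname{amp-recv}$ event, and whenever $p_i$ internally decides, $P_i$ decides the same value. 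Because $\amp_f$ is asynchronous and imposes no constraint on delivery order, each $P_i$ may deliver buffered messages in the order it learns them; the only nontrivial scheduling rule is the \emph{acknowledgement gate}: $P_i$ lets $p_i$ take its next step only after $p_i$'s most recent amp-message has been acknowledged by at least $n-f$ processes.

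The crux is to show that every $\sfho_f$ run of this system projects onto a valid $\amp_f$ run of $\calP$ whose faulty set is exactly the set of silenced processes, which by Lemma~\ref{lem:at_most_f_silenced} has size at most $f$ when $n > 2f$. Integrity and No Duplicates hold by construction. For the liveness conditions I rely on the dichotomy underlying Lemma~\ref{lem:all_see_non_sil}: the forward cone of a message first broadcast in round $r$ reaches either all $n$ processes (if $p_i$ is not silenced from $r$) or at most $f$ of them (if it is). I would first record the monotonicity fact that if $p_i$ is silenced from $r$ then its whole forward cone from round $r$ stays within $\textsc{Reach}_i(r,\infty)$ — since $p_i$ always hears itself, it lies in that reach, and the reach is closed under later deliveries — so $p_i$ is silenced from every later round as well. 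Hence each silenced process has a well-defined minimal silencing round $r^*$: messages it broadcasts before $r^*$ reach everyone, while messages it broadcasts from $r^*$ on reach at most $f$ processes.

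From this the correspondence follows. If $p_i$ is not silenced, every amp-message it emits has reach $n$, is therefore eventually acknowledged by all $n \ge n-f$ processes, and those acknowledgements (carried by non-silenced processes, whose reach is $n$) return to $P_i$; so the gate never blocks, $p_i$ takes infinitely many steps, and since at most $f$ processes are silenced its messages are delivered to every other non-silenced process, giving Non-faulty Liveness and making $p_i$ non-faulty in the projected run. If $p_i$ is silenced, its sends before $r^*$ are acknowledged by all and hence delivered everywhere, but because each gate costs at least one round the round indices of successive sends strictly increase, so $p_i$ eventually issues a send in some round $\ge r^*$; that message reaches at most $f < n-f$ processes, the gate never opens, and $p_i$ halts after finitely many steps. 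Thus the silenced processes are exactly the faulty processes of the projected run, all but the last of their messages reach every non-faulty process (Faulty Quasi-Liveness), and the run is a legitimate $\amp_f$ execution with at most $f$ crashes.

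Finally, since $\calP$ solves $T$ in $\amp_f$, in the projected run $(I,O)\in\Delta$ and every non-faulty process decides a valid value; translating back, every non-silenced process in $\sfho_f$ decides and its decision satisfies $\Delta$, so the constructed protocol $\calQ$ solves $T$ in $\sfho_f$. The main obstacle is precisely the scheduling argument of the third paragraph: arranging the acknowledgement gate so that the silenced processes, and only those, stall, and checking that this leaves exactly each silenced process's final message undelivered so that Faulty Quasi-Liveness is met. This is where the containment of a silenced process's forward cone within a set of size at most $f$ does the real work.
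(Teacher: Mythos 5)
Your proposal is correct and follows essentially the same route as the paper's proof: a full-information echo simulation with an acknowledgement gate, combined with Lemmas~\ref{lem:all_see_non_sil} and~\ref{lem:at_most_f_silenced} to show that the silenced processes are exactly the processes that stall (hence faulty in the projected $\amp_f$ run) and that Faulty Quasi-Liveness is preserved. The only substantive difference is your gate threshold of $n-f$ acknowledgements versus the paper's $f$ acknowledgements from distinct other processes; both lie in the window that works, since at most $f$ silenced processes guarantees at least $n-f$ returning acknowledgements for a non-silenced sender, while a silenced sender's post-silencing message can gather at most $f < n-f$ acknowledgements.
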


\begin{proof}
	Let $T = (\mathcal{I}, \mathcal{O}, \Delta)$ be a task that is
	solvable in $\amp_f$ by a protocol $\calP = \left\{ p_1, \dots, p_n \right\}$.
	We use a hard-coded unique identifier~$id_i$ for each process~$p_i$.

	We construct a simulation system that simulates $\amp_f$ in $\sfho_f$ so that for $1 \le i \le n$, (1) the $i$-th simulated process is $p_i$ and (2) the $i$-th simulation machine~$P_i$ has the specification given by Algorithm~\ref{algo:AMPinHO_gen}.
    Roughly, to simulate broadcasting a message in $\amp_f$, it is echoed in every round of $\sfho_f$ by the processes that have already received it, in the hope that all processes eventually receive it.
	This fails when the message's original process is silenced in $\sfho_f$, and the simulated run risks violating the Non-faulty Liveness condition of $\amp_f$.
	So processes rely on acknowledgements to determine if their messages have reached everyone,
	until which they temporarily block themselves from proceeding in $\amp_f$.
	Thus if they are silenced in $\sfho_f$, they are faulty in $\amp_f$, and the Faulty Quasi-Liveness condition allows their last message to be not received by everyone in the simulated run.

	\begin{algorithm}[tb]
		\caption{Pseudocode for machine $P_i$}
		\label{algo:AMPinHO_gen}
		\begin{algorithmic}[1]
			\State Initialize set $seen := \emptyset$ \Comment{History of messages received since the beginning}
            \State Initialize set $old := \emptyset$ \Comment{History of messages received before $p_i$'s last amp-bc}
			\State Initialize variable $latest := \texttt{null}$
			\State Initialize variable $T := 0$ \Comment{For timestamping}
            \State Enable $\operatorname{ho-bc}_i(seen)$ \Comment{Enable broadcast for the first round of $\sfho_f$}
			\medskip
            \State{When $\operatorname{amp-bc}_i(m)$ occurs:}
			\IndState $latest \gets \left\langle id_i, m, T \right\rangle$
			\IndState Add $latest$ to $seen$
			\IndState $T \gets T+1$
			\medskip
			\State{When $\operatorname{ho-recv}_i(M')$ occurs:}
            \IndState \parbox[t]{403pt}{Add previously unseen messages in $M'$ to $seen$ and acknowledge those of type $\left\langle s, m', t \right\rangle$ by adding $ack(id_i, \left\langle s, m', t \right\rangle)$ to $seen$} \label{algo:update_seen}
            \vspace{2pt}
            \IndState \parbox[t]{423pt}{If $latest \ne  \texttt{null}$ \textbf{and} $ack(j, latest)$ occurs in $M'$ for at least $f$ distinct values of $j \ne id_i$: \label{algo:condition_for_amp_recv}}
			\IndIndState Initialize set $pending := \emptyset$
			\IndIndState Add messages of type $\left\langle s, m', t \right\rangle$ in $seen\!\setminus\!old$ to $pending$
			\IndIndState If $pending$ is empty:
			\IndIndIndState Enable $\operatorname{amp-recv}_i(\bot)$
			\IndIndState Else:
			\IndIndIndState For each $\left\langle s, m', t \right\rangle \in pending$: \Comment{Release all newly received messages} \label{algo:release_messages_sim2}
			\IndIndIndIndState Enable $\operatorname{amp-recv}_i(m')$
			\IndIndState $old \gets seen$
			\IndIndState $latest \gets \texttt{null}$ \Comment{Update $old$ and reset $latest$ for next AMP step}
            \smallskip
            \IndState Enable $\operatorname{ho-bc}_i(seen)$ \Comment{Enable broadcast for the next round of $\sfho_f$}
		\end{algorithmic}
	\end{algorithm}

    To solve a task $T$ in $\sfho_f$, a protocol $\calQ = \left\{ q_1, \dots, q_n \right\}$ can again be constructed in the same manner as in Section~\ref{sec:HOinAMP}, except that now the main events of $q_i$ are ho-bc and ho-recv, while its internal events are amp-bc and amp-recv.

	The simulator $q_i$ works as follows.
	The input of $q_i$ is also used as the input for~$p_i$.
    When $p_i$ does an internal $\operatorname{amp-bc}_i(m)$, $q_i$ broadcasts $latest = \left\langle id_i, m, T \right\rangle$ for some time value $T$ in every subsequent round of $\sfho_f$.
	Also in every round, it broadcasts all previously received messages and acknowledgements for messages of the form $\left\langle \cdot, \cdot, \cdot \right\rangle$.
	It waits to receive at least $f$ acknowledgements for~$\left\langle id_i, m, T \right\rangle$ before unblocking $p_i$ by enabling an internal $\operatorname{amp-recv}_i(m')$ for every pending received message of the form $\left\langle \cdot, m', \cdot \right\rangle$.
	If $p_i$ internally decides an output~$o$, then~$q_i$ decides~$o$ as well.

	Every~$P_i$ tags every message~$m$ from~$p_i$ with
	the identifier~$id_i$ and a timestamp~$T$,
    to distinguish an $\amp_f$ message that is broadcast multiple times
	by a process or by different processes.




	The first observation is that if~$q_i$ is non-faulty (in $\sfho_f$)
	then~$p_i$ is also non-faulty (in $\amp_f$).
	By Lemma~\ref{lem:all_see_non_sil} and since~$q_i$ is non-silenced,
	every non-null $latest$ value that occurs at~$q_i$
	eventually reaches all processes in~$\calQ$.
	After a $latest$ value by $q_i$ is acknowledged by all the processes, $q_i$ receives at least $n - f > f$ acknowledgements and thus allows $p_i$ to continue in the simulated run.

	Lemma~\ref{lem:at_most_f_silenced} implies that there are
	at most $f$ faulty processes in $\calP$ in the simulated run.

    In the simulated run, if a process $p_i$ has an $\operatorname{amp-bc}(m)$ event eventually followed by an amp-recv event, it means that $q_i$ received at least $f$ acknowledgements to $\left\langle id_i, m, T \right\rangle$ (for some $T$) from others before allowing further amp-recvs at $p_i$.
	Thus eventually all processes $q_j$ receive $\left\langle id_i, m, T \right\rangle$ and as all non-faulty processes $p_j$ have infinitely many amp-recvs, $\operatorname{amp-recv}(m)$ must be one of them.
	Non-faulty Liveness and Faulty Quasi-Liveness hold as a consequence.
	Finally, it is straightforward to show that the simulated run satisfies
	the Integrity and No Duplicates properties.

	We conclude that the simulated run is valid in $\amp_f$, meaning the vector of outputs of the simulated processes $p_i$ and thus that of the $q_i$ is valid.
	Moreover, a non-faulty $q_i$ gets a non-$\bot$ output from its internal $p_i$.
	Thus task $T$ is also solvable in $\sfho_f$, proving the lemma.
\end{proof}

The next sections study the conditions under which a task solvable
in $\sfho_f$ is also solvable in $\ho_f$,
and then apply Lemma~\ref{lem:HOinAMP} and Lemma~\ref{lem:amp_in_sfho}.

\section{Equivalence for colorless tasks} \label{subsec:colorless}

We now show that when $n > 2f$,
$\ho_f$ and $\amp_f$ solve the same set of colorless tasks.
Let $T = (\mathcal{I}, \mathcal{O}, \Delta)$ be a colorless task that
is solvable in $\sfho_f$ by a protocol $\mathcal{P} = \left\{ p_1, \dots, p_n \right\}$.
One can easily construct a protocol $\calQ = \left\{ q_1, \dots, q_n \right\}$
that solves $T$ in $\ho_f$.
In every round, every process $q_i$ runs $p_i$,
and if a yet undecided process $q_i$ hears from process $q_j$ such that
$q_j$ has already decided $o$, then $q_i$ decides $o$.

Let $\alpha$ be a run of $\calQ$ with input vector $I$ and output vector $O_{\calQ}$ in $\ho_f$.
We first prove that all processes of $\calQ$ decide on a value.
\begin{lemma}
	\label{lem:all_processes_decide}
	All processes $q_i$ have non-$\bot$ output values in $O_{\calQ}$.
\end{lemma}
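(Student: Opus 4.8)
The plan is to show that every process eventually decides by tracing the propagation of a single decision made by a non-silenced process. First I would observe that since each $q_i$ runs $p_i$ in lockstep in every round, and the $\ho_f$ schedule underlying $\alpha$ satisfies exactly the conditions required of an $\sfho_f$ schedule (Round-robin, Integrity, No Duplicates, Weak Liveness), the internal execution of $\left\{ p_1, \dots, p_n \right\}$ is a legitimate run of $\calP$ in $\sfho_f$; its set of faulty (silenced) processes is determined by the reach structure of $\alpha$, which the internal run shares. Because $\calP$ solves $T$ in $\sfho_f$, every process that is non-faulty in $\sfho_f$ --- that is, every non-silenced process --- obtains a non-$\bot$ output from its internal copy of $p_i$, so the corresponding $q_i$ decides.

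Next I would establish that at least one such seed decision exists. Since $n > 2f$, Lemma~\ref{lem:at_most_f_silenced} guarantees at most $f$ silenced processes, hence at least $n - f \ge 1$ non-silenced ones. Fix one, say $q_k$; by the previous paragraph its internal $p_k$ decides some value $o$, during the local computation of some round $r_k$, so that from the send of round $r_k + 1$ onward $q_k$ broadcasts a view recording the decision $o$ (recall that protocols are full-information).

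The core of the argument is to push this decision to every process along the reach of $q_k$. Because $q_k$ is not silenced and $n > f$, Lemma~\ref{lem:all_see_non_sil} gives $\abs{\textsc{Reach}_k(r_k + 1, \infty)} = n$. I would then prove by induction on $s \ge r_k + 1$ that every process in $\textsc{Reach}_k(r_k + 1, s)$ already records the decision $o$ in the message it sends in round $s$: the base case $\textsc{Reach}_k(r_k + 1, r_k + 1) = \left\{ q_k \right\}$ holds by the choice of $r_k$, and for the inductive step any process entering the reach in round $s$ does so by hearing in round $s$ from some process already in $\textsc{Reach}_k(r_k + 1, s)$, thereby receiving a message carrying $o$, adopting it (if still undecided) in that round's local computation, and re-broadcasting it in round $s + 1$. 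Since $\textsc{Reach}_k(r_k + 1, \infty)$ is all of $\calQ$, every process eventually hears of the decision $o$ and is therefore decided, which proves the lemma.

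The step I expect to be the main obstacle is aligning the timing in the induction: a process decides during the local computation phase of a round but only rebroadcasts that decision in the send phase of the following round, so the reach used to track propagation must start at round $r_k + 1$ rather than $r_k$. Getting this off-by-one consistent with Definition~\ref{defn:reach}, and confirming that the decision genuinely travels with the full-information views at every step of the reach, is where the care is needed; the existence of the seed and the invocations of the two structural lemmas are then routine.
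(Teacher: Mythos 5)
Your proof is correct in substance but takes a genuinely different route from the paper's, which is much shorter. The paper also starts from Lemma~\ref{lem:at_most_f_silenced} (at least $n-f$ non-silenced processes exist, and each decides via its internal $p_i$ since the internal execution is a valid $\sfho_f$ run), but then finishes in one stroke with a counting argument: wait for the round by which \emph{all} non-silenced processes have decided; in the next round, any undecided $q_i$ hears from at least $n-f$ processes by Weak Liveness, and since the undecided processes are a subset of the at most $f$ silenced ones and $n-f > f$, at least one heard-of process is decided, so $q_i$ copies. Your argument instead seeds from a \emph{single} non-silenced decider $q_k$ and propagates decidedness transitively along $\textsc{Reach}_k(r_k+1,\infty)$, invoking Lemma~\ref{lem:all_see_non_sil} to conclude the reach is all of $\calQ$. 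This buys a more explicit picture of how decisions flow (and needs only one seed rather than all $n-f$), at the cost of the reach induction and its indexing; the paper's pigeonhole step avoids the reach machinery entirely and gives a concrete one-extra-round bound.

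Two small repairs your version needs. First, your induction invariant ``every process in $\textsc{Reach}_k(r_k+1,s)$ records the decision $o$'' is false as stated: a non-silenced process $q_j$ in the reach may have decided its own internal value $o' \neq o$ via $p_j$ before ever hearing of $o$, and the protocol never overwrites a decision. The invariant must be weakened to ``every process in the reach has decided \emph{some} value (recorded in its round-$s$ message),'' which your parenthetical ``(if still undecided)'' already implicitly uses, and which suffices for the lemma since only decidedness, not the value $o$, needs to propagate. Second, the off-by-one you flag is real: by Definition~\ref{defn:reach}, a process entering the reach at index $s$ hears in round $s-1$ from a member of $\textsc{Reach}_k(r_k+1,s-1)$, not in round $s$; with Weak Liveness giving monotonicity of the reach, the corrected induction goes through.
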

\begin{proof}
	Since $n > 2f$, at least $n - f$ processes in $\calQ$ are not silenced by Lemma~\ref{lem:at_most_f_silenced} and decide by some round.
	In the next round, any $q_i$ that has not yet decided hears from a decided process due to Weak Liveness and copies its output.
\end{proof}

It remains to show that $\calQ$ gives a valid output vector for the input vector of $\alpha$.
Let $O_\calP$ be the output vector of $\calP$ on $\alpha$.
Then $(I, O_\calP) \in \Delta$ as $\alpha$ can be viewed as a run of $\calP$ in $\sfho_f$.
\begin{lemma}
	\label{lem:colorless_property_satisfied}
	$val(O_{\calQ}) \subseteq val(O_{\calP})$ and thus $(I, O_{\calQ}) \in \Delta$.
\end{lemma}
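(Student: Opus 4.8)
The plan is to establish the set inclusion $val(O_{\calQ}) \subseteq val(O_{\calP})$ directly, and then invoke the colorless property of $T$ together with the already-established membership $(I, O_{\calP}) \in \Delta$ to conclude $(I, O_{\calQ}) \in \Delta$. The guiding idea is that every decision value produced by $\calQ$ must originate from an \emph{internal} decision of some simulated process $p_k$, and every such internal decision is recorded in $O_{\calP}$.

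First I would make precise the two ways a process $q_i$ can obtain its output in $\alpha$: either $p_i$ decides a value $o$ internally and $q_i$ adopts $o$, or $q_i$ is still undecided and copies the output $o$ of some process $q_j$ that has \emph{already} decided. The crucial point is that the copying is strictly backward in time: by the round semantics of $\ho_f$ (send, then receive, then local computation), a value that $q_i$ reads from $q_j$ in round $r$ reflects $q_j$'s state at the end of round $r-1$, so $q_j$ must have decided $o$ in some round strictly before $r$. Hence, fixing any value $o \in val(O_{\calQ})$ and letting $r^\ast$ be the earliest round in which some process decides $o$, a process $q_k$ that decides $o$ in round $r^\ast$ cannot have copied it (there is no earlier decider of $o$ to copy from) and therefore decided $o$ through the internal decision of $p_k$.

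Since $O_{\calP}$ is exactly the output vector obtained by viewing $\alpha$ as a run of $\calP$ in $\sfho_f$, whenever $p_k$ decides $o$ internally we have $O_{\calP}[k] = o$, so $o \in val(O_{\calP})$. As $o$ was arbitrary, this gives $val(O_{\calQ}) \subseteq val(O_{\calP})$. I would then conclude by applying the definition of a colorless task with $O = O_{\calP}$ and $O' = O_{\calQ}$ (taking $I' = I$, so that $val(I) \subseteq val(I')$ holds trivially): since $(I, O_{\calP}) \in \Delta$ and $val(O_{\calQ}) \subseteq val(O_{\calP})$, we obtain $(I, O_{\calQ}) \in \Delta$.

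The step I expect to require the most care is justifying that the copy-chains are well-founded and terminate at a genuine internal decision. This hinges on the round semantics guaranteeing that a copied value is always strictly older than the copy, so that no cyclic or same-round copying can occur, and on the observation that $O_{\calP}[k]$ coincides with $p_k$'s internal decision for every process that actually decides (even though silenced processes may leave $\bot$ entries in $O_{\calP}$, these are irrelevant to the inclusion, which concerns only non-$\bot$ values). The minimality argument over rounds packages both facts cleanly and avoids an explicit induction along the copy-chain.
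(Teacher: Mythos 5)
Your proof is correct and follows essentially the same route as the paper's: every decided value either comes from an internal decision of some $p_k$ (hence lies in $val(O_{\calP})$) or was copied from an earlier decider, and colorlessness then yields $(I, O_{\calQ}) \in \Delta$. The only difference is presentational --- you replace the paper's induction along the copy-chain with an earliest-decider (minimal-round) argument, which is the same idea packaged via well-ordering.
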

\begin{proof}
	Suppose $q_i$ decides output $o$ in some round.
	This must be because either~$p_i$ decides~$o$ in that round in which case $o \in val(O_{\calP})$, or~$q_i$ heard from some~$q_j$ that had already decided~$o$ in which case we can argue by induction that $o \in val(O_\calP)$.
	Thus $val(O_{\calQ}) \subseteq val(O_{\calP})$ and as~$T$ is colorless, $(I, O_{\calQ}) \in \Delta$.
\end{proof}

Thus protocol $\calQ$ solves colorless task $T = (\mathcal{I}, \mathcal{O}, \Delta)$ in $\ho_f$,
proving that for $n > 2f$, colorless tasks solvable in $\sfho_f$ are solvable in $\ho_f$.
Together with Lemma~\ref{lem:HOinAMP} and Lemma~\ref{lem:amp_in_sfho}, this implies:

\begin{theorem}
	\label{thm:colorless}
	For $n > 2f$, models $\ho_f$ and $\amp_f$ solve the same set of colorless tasks.
\end{theorem}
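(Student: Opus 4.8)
The plan is to prove the two inclusions separately, since equivalence on colorless tasks means precisely that a colorless task is solvable in $\ho_f$ if and only if it is solvable in $\amp_f$. One direction holds for all tasks and all $f$: by Lemma~\ref{lem:HOinAMP}, any task solvable in $\ho_f$ is solvable in $\amp_f$, so in particular every colorless task solvable in $\ho_f$ is solvable in $\amp_f$. This step needs neither colorlessness nor any restriction on $f$ beyond $0 \le f \le n$, so it carries over immediately.

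For the converse inclusion I would route the argument through the intermediate model $\sfho_f$ rather than constructing a direct $\amp_f$-to-$\ho_f$ simulation. First, Lemma~\ref{lem:amp_in_sfho} shows that, under the hypothesis $n > 2f$, any task solvable in $\amp_f$ is solvable in $\sfho_f$. It then remains to transfer solvability from $\sfho_f$ to $\ho_f$ for colorless tasks, and here I would use the copy-decision construction $\calQ$ already described: each $q_i$ runs the $\sfho_f$ protocol $\calP$ internally, and a still-undecided $q_i$ that hears a decided value $o$ from some $q_j$ simply decides $o$ as well. Lemma~\ref{lem:all_processes_decide} guarantees that every $q_i$ eventually decides, and Lemma~\ref{lem:colorless_property_satisfied} guarantees $val(O_\calQ) \subseteq val(O_\calP)$, so by colorlessness the output vector $O_\calQ$ remains admissible, that is, $(I, O_\calQ) \in \Delta$. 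Chaining these two steps yields that for $n > 2f$ every colorless task solvable in $\amp_f$ is solvable in $\ho_f$, and together with the first inclusion this gives the claimed equivalence.

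The main obstacle is the $\sfho_f$-to-$\ho_f$ step, because the two models disagree precisely on who is obliged to decide: in $\sfho_f$ a silenced process is declared faulty and may output $\bot$, whereas in $\ho_f$ no process is faulty and every process must produce a non-$\bot$ value. Bridging this gap is exactly where colorlessness and the bound $n > 2f$ are essential. The quantitative input comes from Lemma~\ref{lem:at_most_f_silenced}: when $n > 2f$ at most $f$ processes are silenced, so at least $n - f > f$ processes are non-silenced, hence non-faulty in $\sfho_f$, hence forced to decide. Weak Liveness then ensures that any remaining (silenced) process eventually hears from one of these decided processes and can copy its decision, and colorlessness is what makes copying safe, since it keeps the output inside $val(O_\calP)$ and therefore inside $\Delta$. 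For a colored task this copying argument would fail, which is consistent with the separation recorded in the summary table.
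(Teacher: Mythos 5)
Your proposal is correct and follows exactly the paper's own route: Lemma~\ref{lem:HOinAMP} for the $\ho_f$-to-$\amp_f$ direction, then Lemma~\ref{lem:amp_in_sfho} composed with the copy-decision construction (Lemmas~\ref{lem:all_processes_decide} and~\ref{lem:colorless_property_satisfied}, resting on Lemma~\ref{lem:at_most_f_silenced} and Weak Liveness) for the converse. Your closing remarks on where colorlessness and $n > 2f$ are essential also match the paper's reasoning precisely.
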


\section{Equivalence and separation for colored tasks}

\subsection{Equivalence with one fault}
\label{subsec:colored_f=1}

We now show for $f \le 1$ and $n > 2f$ that $\ho_f$ and $\amp_f$
solve the same set of colored tasks.
When $f=0$, there are no silenced processes in $\sfho_0$,
and it is the same model as $\ho_0$ and thus $\amp_0$
(by Lemma~\ref{lem:HOinAMP} and Lemma~\ref{lem:amp_in_sfho}).
Thus, it suffices to consider the case where $f = 1$ and $n > 2f = 2$.

Let $T = (\mathcal{I}, \mathcal{O}, \Delta)$ be a colored task that is solvable
in $\sfho_1$ by a protocol $\calP = \left\{ p_1, \dots, p_n \right\}$.
We show that the following protocol $\calQ = \left\{ q_1, \dots, q_n \right\}$
solves task $T$ in $\ho_1$.

Recall that unlike colorless tasks,
colored tasks require each process to produce its own output;
thus decisions cannot be freely adopted from other processes.
Thus, while process $q_i$ runs $p_i$ in each round,
the decision rule is modified as in Algorithm~\ref{algo:decision_silenced_node}.

\begin{algorithm}[]
	\caption{Pseudocode for process $q_i$}
	\label{algo:decision_silenced_node}
	\begin{algorithmic}[1]
        \State Round $r$: Broadcast and receive messages and change the state of $p_i$ accordingly
        \State If $q_i$ does not have an output yet, then
		\IndState\label{line:normaldecision} If $p_i$ decides $o$ in round $r$, then $q_i$ decides $o$
        \IndState\label{line:all_but_one_decide} \parbox[t]{373pt}{Else if $q_i$ determines from its round-$r$ view that every other process has decided}
        \IndIndState \label{line:alternatedecision} \parbox[t]{373pt}{Then $q_i$ chooses an output that together with the others' outputs and the input vector, satisfies the task}
	\end{algorithmic}
\end{algorithm}

Let $\alpha$ be a run of $\calQ$ in $\ho_1$ with input vector $I$
and output vector $O_{\calQ}$.

\begin{lemma}
	\label{lem:can_construct_new_run}
	If $q_i$ executes line~\ref{line:all_but_one_decide} in round $r$,
	then there is a value for $q_i$ to choose in line~\ref{line:alternatedecision}.
\end{lemma}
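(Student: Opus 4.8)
The plan is to produce the required output for $q_i$ by constructing an auxiliary run of $\calP$ in $\sfho_1$ in which $p_i$ is \emph{not} silenced, so that $\sfho_1$-solvability forces $p_i$ to emit a concrete value that $q_i$ can adopt. First, though, I would determine the outputs of the other processes. The crucial preliminary fact is that at most one process ever decides through line~\ref{line:alternatedecision}; consequently, the $n-1$ processes $q_j$ ($j \ne i$) whose decisions $q_i$ observes when it executes line~\ref{line:all_but_one_decide} all decided through the normal rule on line~\ref{line:normaldecision}, so their internal processes $p_j$ decided $o_j$ (where I write $o_j$ for $O_{\calQ}[j]$) in the run of $\calP$ underlying $\alpha$. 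To establish the preliminary fact, suppose $q_i$ decides through line~\ref{line:alternatedecision} in round $r$ and some $q_j$ also does, in round $r'$. In the Heard-Of model a decision taken in the local-computation phase of a round can only be observed by another process from a strictly later round onward, so $q_i$'s observing $q_j$'s decision forces $r' < r$, while symmetrically $q_j$'s observing $q_i$'s decision forces $r < r'$, a contradiction. Since a process decides at most once, these decisions $o_j$ are final and were made by round $r$.

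Next I would build a run $\beta$ of $\calP$ in $\sfho_1$ that agrees with the embedded $\calP$-run of $\alpha$ through round $r$ and that, from round $r+1$ onward, follows the complete Heard-Of schedule in which every process hears from all $n$ processes in every round. This schedule satisfies Weak Liveness, Integrity and No Duplicates, so $\beta$ is a legitimate $\sfho_1$ run. Because every process hears everyone from round $r+1$ on, $\abs{\textsc{Reach}_j(\rho, \infty)} = n$ for every process $p_j$ and every round $\rho$; as $n > f = 1$, no process is silenced, so by Definition~\ref{defn:faulty_in_sfho} every process is non-faulty in $\beta$. Since $\calP$ solves $T$ in $\sfho_1$, every process then decides a non-$\bot$ value and the resulting output vector $O_\beta$ satisfies $(I, O_\beta) \in \Delta$. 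The processes $p_j$ with $j \ne i$ have already decided $o_j$ by round $r$ and keep these outputs in $\beta$, while $p_i$, now non-faulty, decides some concrete value $v$; hence $O_\beta = (o_1, \dots, o_{i-1}, v, o_{i+1}, \dots, o_n)$ with $(I, O_\beta) \in \Delta$. Thus $v$ is an output that, combined with the other processes' outputs and the input vector, satisfies the task, so $q_i$ may choose it on line~\ref{line:alternatedecision}.

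The main difficulty, and the reason for extending with a \emph{silence-free} schedule rather than simply declaring $p_i$ silenced, is that silencing $p_i$ would only yield $(I, (o_1, \dots, \bot, \dots, o_n)) \in \Delta$, and the closure property of tasks permits replacing outputs by $\bot$ but not the reverse; forcing $p_i$ to actually decide is exactly what converts this $\bot$ into a usable value. The preliminary observation that at most one process uses line~\ref{line:alternatedecision} is equally essential, since it is what guarantees that the other processes' $\calQ$-outputs coincide with genuine $\calP$-decisions and are therefore faithfully reproduced in $\beta$.
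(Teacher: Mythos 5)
Your proof is correct, and its engine is the same as the paper's: extend a round-$r$ prefix with a lossless schedule from round $r+1$ onward, observe that no process is then silenced, and invoke $\sfho_1$-solvability to force $p_i$ to decide a value compatible with the outputs the others have already fixed. The genuine difference is in \emph{whose} run gets extended. The paper has $q_i$ itself reconstruct a run $\alpha'$ from its local information --- the view of each $q_j$ at the last round $q_i$ heard from it, with the missing events filled in by exhaustive search (at least one completion is valid, namely $\alpha$ itself) --- so the proof is constructive: it exhibits the procedure by which $q_i$ actually computes the value it chooses. You instead extend the true run $\alpha$, which $q_i$ cannot observe (in the relevant case $q_i$ is silenced), so your argument is a pure existence statement. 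That is enough for the lemma as stated, because line~\ref{line:alternatedecision} only asks $q_i$ to choose \emph{some} output compatible with the others' outputs and the input vector, both of which lie in $q_i$'s round-$r$ view; your run $\beta$ certifies that such an output exists. What the paper's version buys is local computability of the choice; what yours buys is explicitness where the paper is terse: you prove up front that no $q_j$, $j \neq i$, can itself have decided through line~\ref{line:alternatedecision} (the mutual-observation contradiction), which is exactly what is needed to equate the decisions $q_i$ observes with genuine $\calP$-decisions. The paper leaves this implicit in the present lemma and only establishes it later, in Lemma~\ref{lem:colored_valid_outputs}, so your proof duplicates that argument but is self-contained where the paper's relies on the subsequent lemma.
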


\begin{proof}
	Process $q_i$ starts constructing a run $\alpha'$ by using the view of each $q_j$ for the round where $q_i$ last hears from $q_j$.
    It can then fill in the missing events in $\alpha'$ till round $r$ by going over all possibilities (there is at least one that is valid, namely run $\alpha$ till round $r$).
	From round $r + 1$ onwards, it assumes that no messages are lost, so no process is silenced in $\alpha'$.
	Thus $p_i$ decides an output in $\alpha'$, which $q_i$ uses in $\alpha$.
	To see that $q_i$'s choice leads to $O_{\calQ}$ being a valid output vector, note that it is equal to the output vector yielded by $\calP$ on $\alpha'$.
\end{proof}

We complement Lemma~\ref{lem:can_construct_new_run} with
the observation that if a process is silenced,
then it has nearly up-to-date information about the entire system.

\begin{lemma}
	\label{lem:silenced_node_hears_from_all}
	If process $q_i$ is silenced from round $R$ in $\alpha$,
	then for every round $r \ge R + 1$,
	its round-$r$ view contains the round-$(r - 2)$ views of all processes.
\end{lemma}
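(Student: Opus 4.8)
The plan is to leverage the fact that under $f = 1$ every process may drop at most one message per round, and that silencing $q_i$ uses up exactly that budget for all of the other processes. First I would unwind the definition of silencing: since $q_i$ is silenced from round $R$ with $f = 1$, we have $\abs{\textsc{Reach}_i(R,\infty)} \le 1$, and as $q_i \in \textsc{Reach}_i(R,R) \subseteq \textsc{Reach}_i(R,\infty)$, this forces $\textsc{Reach}_i(R,\infty) = \{q_i\}$. Unfolding the reach recursion by induction on $s$ (using $\textsc{Reach}_i(R,s+1) = \bigcup_{q_j \in \textsc{Reach}_i(R,s)} \textsc{Rcv}_j(s)$) then yields $\textsc{Rcv}_i(s) \subseteq \{q_i\}$ for every $s \ge R$; equivalently, no process other than $q_i$ hears from $q_i$ from round $R$ onward.

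The key step, which I expect to carry the argument, is a counting observation about miss budgets. Fix a round $s \ge R$ and a process $q_k$ with $k \ne i$. By the previous step $q_k$ misses $q_i$'s message in round $s$, and since it may lose at most $f = 1$ message per round (Weak Liveness), this is its \emph{only} loss. Hence in every round $s \ge R$, each $q_k$ with $k \ne i$ hears from all processes except $q_i$. In other words, silencing $q_i$ pins down the heard-of sets of all surviving processes completely: each of them knows everything about everyone but $q_i$.

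With this observation the lemma follows by a two-hop propagation argument, keeping in mind the full-information convention that the message broadcast in round $t$ is the sender's round-$(t-1)$ view. For $q_j = q_i$ the claim is immediate, since $q_i$'s round-$r$ view contains its own round-$(r-2)$ view. For $j \ne i$ and $r \ge R + 1$, apply the budget observation in round $r - 1 \ge R$: every $q_k$ with $k \ne i$ hears from $q_j$ there, so $q_k$'s round-$(r-1)$ view already contains $q_j$'s round-$(r-2)$ view. It then remains to relay one such view into $q_i$ during round $r$. By Weak Liveness $q_i$ hears from at least $n - 1$ processes in round $r$, hence from at least $n - 2 \ge 1$ processes besides itself (here I use $n > 2$); any such neighbor $q_k \ne i$ delivers its round-$(r-1)$ view, so $q_i$'s round-$r$ view contains $q_j$'s round-$(r-2)$ view.

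The main obstacle is conceptual rather than computational: recognizing that a unit miss-budget makes silencing so restrictive that every surviving process is fully up to date about everyone but $q_i$, so that a single relay in round $r$ suffices. I would also flag exactly where the hypotheses bite, since this explains why the lemma is special to $f \le 1$: the budget observation fails for $f \ge 2$ (a process could then miss both $q_i$ and a second sender, breaking the claim that it hears from all of $q_j$), and the final counting step needs $q_i$ to have a neighbor other than itself, which is precisely where $n > 2$ enters. Notably, the argument uses only that $q_i$ is silenced, not that it is the unique silenced process.
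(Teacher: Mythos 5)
Your proof is correct and takes essentially the same route as the paper's: both hinge on the observation that, since no process hears from $q_i$ after round $R$, the $f = 1$ miss budget of every other process is already spent on $q_i$, so each of them hears from all remaining processes in round $r-1$, and a single relay into $q_i$ in round $r$ (which exists because $n > 2$) completes the two-hop propagation. The only difference is presentational: the paper lets $q_i$ absorb the round-$(r-1)$ views of the $n-1$ processes it hears from directly and relays only the single possible exception $q_j$, whereas you relay all round-$(r-2)$ views uniformly through any one neighbor.
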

\begin{proof}
	The round-$r$ view of $q_i$ contains the round-$(r - 1)$ views (and hence the round-$(r - 2)$ views) of at least $n - 1$ processes including itself.
	Let $q_j$ be an exception.
	All processes in $\calQ \setminus \left\{ q_i, q_j \right\}$ must hear from $q_j$ in round $r - 1$ as $q_i$ is already silenced and no one hears from it.
	Thus $q_i$ receives the round-$(r - 2)$ view of $q_j$ from $\calQ \setminus \left\{ q_i, q_j \right\}$ in round $r$.
\end{proof}

It remains to show that all processes decide and the run yields a valid output vector.

\begin{lemma}
	\label{lem:colored_termination}
	All processes $q_i$ have non-$\bot$ output values in $O_{\calQ}$.
\end{lemma}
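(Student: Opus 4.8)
The plan is to show every process decides by splitting on whether it is silenced in $\alpha$. First I would invoke Lemma~\ref{lem:at_most_f_silenced}: since $n > 2f = 2$, at most one process is silenced in $\alpha$. For any non-silenced $q_j$, its internal copy $p_j$ is non-faulty when $\alpha$ is viewed as a run of $\calP$ in $\sfho_1$; because $\calP$ solves $T$ in $\sfho_1$, every non-faulty process must decide, so $p_j$ decides in some round and $q_j$ adopts this output via line~\ref{line:normaldecision}. Hence at least $n-1$ processes decide, and if no process is silenced we are done.

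It therefore remains to handle the single possibly-silenced process, say $q_i$, silenced from some round $R$. If its internal $p_i$ happens to decide, then $q_i$ decides via line~\ref{line:normaldecision} and we are finished. Otherwise I would show that $q_i$ eventually triggers the alternate rule. All other $n-1$ processes decide by some finite round $r_0$; moreover each decides at most once and records its decision in its local history, so the decision is visible in every later view. By Lemma~\ref{lem:silenced_node_hears_from_all}, for every round $r \ge R+1$ the round-$r$ view of $q_i$ contains the round-$(r-2)$ views of all processes. Since in $\ho_1$ every process executes infinitely many rounds, $q_i$ reaches a round $r \ge \max(R+1,\, r_0+2)$; for such $r$ the round-$(r-2)$ view of each $q_j$ with $j \ne i$ already records $q_j$'s decision, so $q_i$ determines from its round-$r$ view that every other process has decided and executes line~\ref{line:all_but_one_decide}.

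Finally, Lemma~\ref{lem:can_construct_new_run} guarantees that once $q_i$ reaches line~\ref{line:all_but_one_decide} there is a valid output to choose in line~\ref{line:alternatedecision}, so $q_i$ decides. Combining the two cases, every $q_i$ has a non-$\bot$ output in $O_{\calQ}$.

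The main obstacle is the middle step: ensuring that the (at most one) silenced process detects in finite time that all others have already decided. This hinges on two facts working together—that silencing leaves $q_i$ with near-complete information (Lemma~\ref{lem:silenced_node_hears_from_all} supplies everyone's round-$(r-2)$ views), and that a decision, once taken, persists in the full-information history and hence surfaces in those views. The case analysis itself is clean precisely because $n > 2$ forces at most one silenced process, so at most one process ever relies on the alternate rule.
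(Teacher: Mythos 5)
Your proof is correct and follows essentially the same route as the paper's: split on whether $q_i$ is silenced, use Lemma~\ref{lem:at_most_f_silenced} to conclude all other processes decide via line~\ref{line:normaldecision}, then combine Lemma~\ref{lem:silenced_node_hears_from_all} (to see those decisions by round $\max(R,R')+2$) with Lemma~\ref{lem:can_construct_new_run} (to produce an output at line~\ref{line:alternatedecision}). Your added justifications---that decisions persist in the full-information views and that non-silenced processes are exactly the non-faulty ones in $\sfho_1$---are details the paper leaves implicit, but the argument is the same.
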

\begin{proof}
	If $q_i$ is not silenced then it is guaranteed to decide using $p_i$.
	Now suppose that $q_i$ is silenced from round $R$ and it does not get to decide using line~\ref{line:normaldecision}.
	By Lemma~\ref{lem:at_most_f_silenced},
	since no other process in $\calQ$ is silenced,
    all processes in $\calQ \setminus \left\{ q_i \right\}$ decide by some round $R'$.
    Thus by Lemma~\ref{lem:silenced_node_hears_from_all}, $q_i$ satisfies the condition in line~\ref{line:all_but_one_decide}
	by round $\max(R, R') + 2$ and also decides,
	by Lemma~\ref{lem:can_construct_new_run}.
\end{proof}

\begin{lemma}
	\label{lem:colored_valid_outputs}
	$(I, O_{\calQ}) \in \Delta$.
\end{lemma}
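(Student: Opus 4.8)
The plan is to compare the output vector $O_{\calQ}$ against the output vector $O_{\calP}$ that $\calP$ produces on $\alpha$, viewed directly as a run in $\sfho_1$. Since $\calP$ solves $T$ in $\sfho_1$, we have $(I, O_{\calP}) \in \Delta$, and every process deciding through line~\ref{line:normaldecision} merely copies $p_i$'s decision, so $O_{\calQ}[i] = O_{\calP}[i]$ for all such $i$. The whole difficulty therefore concerns processes that decide through the alternate rule on line~\ref{line:alternatedecision}.

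The first key step is to show that at most one process ever executes line~\ref{line:alternatedecision}. Suppose two distinct processes $q_i$ and $q_k$ did so, in rounds $r_i \le r_k$. A process using the alternate rule has no output before that round, so $q_k$'s output first appears in its state during round $r_k$ and can be witnessed elsewhere no earlier than round $r_k + 1$. But for $q_i$ to execute line~\ref{line:all_but_one_decide} in round $r_i$ it must have determined from its round-$r_i$ view that $q_k$ has already decided, which requires $r_i \ge r_k + 1 > r_k \ge r_i$, a contradiction. Hence the alternate rule fires at most once.

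If it never fires, then by Lemma~\ref{lem:colored_termination} every process decides through line~\ref{line:normaldecision}, so $O_{\calQ} = O_{\calP}$ and $(I, O_{\calQ}) \in \Delta$ immediately. Otherwise let $q_i$ be the unique process deciding on line~\ref{line:alternatedecision} in some round $r$; every other $q_j$ decides on line~\ref{line:normaldecision}, so $O_{\calQ}[j] = O_{\calP}[j]$ equals $p_j$'s decision in $\alpha$ for $j \ne i$. By Lemma~\ref{lem:can_construct_new_run}, $q_i$ reconstructs a run $\alpha'$ from the last-heard view of each $q_j$ and chooses $p_i$'s decision in $\alpha'$. I would then argue that $O_{\calQ}$ coincides exactly with the output vector of $\calP$ on $\alpha'$: the precondition of line~\ref{line:all_but_one_decide} guarantees that $q_i$'s views already record each other process's decision, so $\alpha'$ agrees with $\alpha$ on every $q_j$'s output for $j \ne i$, while $q_i$'s own output is taken to be $p_i$'s decision in $\alpha'$. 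Since $\alpha'$ is a valid run of $\calP$ in $\sfho_1$ in which no process is silenced from round $r + 1$ on (hence no process is faulty), its output vector lies in $\Delta$, and therefore $(I, O_{\calQ}) \in \Delta$.

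The main obstacle is this last consistency argument: verifying that $\alpha'$ reproduces, for every $q_j$ with $j \ne i$, exactly the decision it made in $\alpha$, so that the validity of $\calP$'s output on $\alpha'$ transfers verbatim to $O_{\calQ}$. It rests on the soundness of the determination made in line~\ref{line:all_but_one_decide}; Lemma~\ref{lem:silenced_node_hears_from_all} is what makes such a determination attainable in the problematic case where the alternate-user is silenced, by guaranteeing that it eventually holds the (nearly current) views of all other processes, and hence their outputs.
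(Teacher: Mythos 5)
Your proof is correct and follows essentially the same route as the paper's: show that at most one process can invoke line~\ref{line:alternatedecision} (the same mutual-precedence absurdity, which you just make more explicit with round counting), split into the all-normal and one-alternate cases, and in the latter case defer validity to Lemma~\ref{lem:can_construct_new_run}. The consistency argument you spell out at the end (that $O_{\calQ}$ equals $\calP$'s output vector on $\alpha'$) is exactly where the paper locates it as well, namely inside the proof of Lemma~\ref{lem:can_construct_new_run}.
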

\begin{proof}
	There can be at most one process $q_i$ that decides using line~\ref{line:alternatedecision}.
	Indeed, if there were two such processes, both must have satisfied line~\ref{line:all_but_one_decide}, meaning each decided their outputs before the other; this is absurd.
	If \emph{all} processes in $\calQ$ decide their outputs in line~\ref{line:normaldecision},
	then $\calP$ must yield output vector $O_{\calQ}$ on run $\alpha$ in $\sfho_1$
	and thus $(I, O_\calQ) \in \Delta$.
	Otherwise, some process $q_i$ decides in line~\ref{line:alternatedecision},
	and thus, all other processes $q_j$ decide in line~\ref{line:normaldecision}.
	By Lemma~\ref{lem:can_construct_new_run},
	$q_i$'s output combined with the others' outputs and the input vector $I$
	satisfy the task and thus $(I, O_\calQ) \in \Delta$.
\end{proof}

Hence protocol $\calQ$ solves task $T = (\mathcal{I}, \mathcal{O}, \Delta)$ in $\ho_1$,
which implies:

\begin{lemma} \label{lem:sfho_in_ho_colored}
	For $f \le 1$ and $n > 2f$, colored tasks solvable in $\sfho_f$ are solvable in $\ho_f$.
\end{lemma}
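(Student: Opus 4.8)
The plan is to prove Lemma~\ref{lem:sfho_in_ho_colored} by exhibiting an explicit protocol $\calQ$ in $\ho_f$ (for $f \le 1$) that wraps a given $\sfho_f$ protocol $\calP$ and show it correctly solves the same colored task. The $f = 0$ case is immediate, since $\sfho_0 = \ho_0$ (there are no silenced processes when $f = 0$), so the real content is the case $f = 1$, $n > 2$. The key idea, which I would make precise through Algorithm~\ref{algo:decision_silenced_node}, is that each $q_i$ simply runs its internal $p_i$ and adopts $p_i$'s decision whenever $p_i$ decides. The only difficulty arises for a process that gets silenced in $\sfho_1$: in that model a silenced process is deemed faulty and need not decide, but in $\ho_1$ \emph{every} process must produce an output. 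So the protocol needs a fallback rule by which a silenced process can still choose a valid output.

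The structure of the argument I would use rests on three supporting facts. First, by Lemma~\ref{lem:at_most_f_silenced}, with $f = 1$ there is \emph{at most one} silenced process; this is what makes the colored case tractable, since the single problematic process can reconstruct the others' decisions. Second, I would establish (Lemma~\ref{lem:silenced_node_hears_from_all}) that a silenced process, precisely because no one hears from it, receives nearly complete information: each of the other $n - 1$ processes must be heard from (directly or relayed), so a silenced $q_i$'s round-$r$ view contains the round-$(r-2)$ views of all processes. The relaying argument here is the one spelled out in the proof of that lemma: if some $q_j$'s view is missing directly, then since $q_i$ is silenced nobody hears from $q_i$, and hence all processes other than $q_i, q_j$ must hear from $q_j$, relaying $q_j$'s view back to $q_i$. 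Third, and this is the crux, I would show (Lemma~\ref{lem:can_construct_new_run}) that once a process knows everyone else has decided, it can locally simulate a valid extension $\alpha'$ of the run in which no process is silenced (it fills in missing events before round $r$ by exhaustive search, using the actual run as a witness of feasibility, and assumes no further omissions after round $r$), so that $p_i$ terminates in $\alpha'$ and its output is consistent with the others' already-committed outputs.

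With these in hand, the two remaining obligations are termination and validity. For termination (Lemma~\ref{lem:colored_termination}), a non-silenced process decides via its internal $p_i$; the unique silenced process, once all others have decided by some round $R'$ and it itself is silenced from round $R$, satisfies the fallback condition by round $\max(R, R') + 2$ thanks to Lemma~\ref{lem:silenced_node_hears_from_all}, and decides via Lemma~\ref{lem:can_construct_new_run}. For validity (Lemma~\ref{lem:colored_valid_outputs}), I would argue that at most one process uses the fallback rule — two processes each waiting to hear that the \emph{other} has already decided is contradictory — so either all processes decide through $p_i$ (and then $\alpha$ itself is a valid $\sfho_1$ run, giving $(I, O_\calQ) \in \Delta$), or exactly one uses the fallback, in which case $\alpha'$ witnesses that its chosen output is compatible with the rest.

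I expect the main obstacle to be making Lemma~\ref{lem:can_construct_new_run} fully rigorous: one must argue that the silenced process can reconstruct \emph{enough} of the global state to simulate forward deterministically, that the space of missing events it searches over is finite and contains at least one globally consistent completion, and critically that the output it computes in $\alpha'$ is guaranteed compatible with the outputs the other processes have \emph{already irrevocably committed} in $\alpha$. The subtlety is that $\alpha'$ and $\alpha$ must agree on those committed outputs; this works because the other processes' decisions are determined by their views, which $q_i$ has faithfully incorporated into $\alpha'$ up to round $r$, so $\calP$ yields the same outputs for them in $\alpha'$ as in $\alpha$. Establishing this consistency — and that $f = 1$ is essential, since with $f \ge 2$ multiple silenced processes could make mutually incompatible reconstructions, which is exactly the separation captured by Theorem~\ref{thm:colored_f>1} — is where the heart of the proof lies.
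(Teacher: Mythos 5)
Your proposal is correct and takes essentially the same route as the paper: the same wrapper protocol (Algorithm~\ref{algo:decision_silenced_node}), the same decomposition into the at-most-one-silenced-process fact (Lemma~\ref{lem:at_most_f_silenced}), the relayed-views argument (Lemma~\ref{lem:silenced_node_hears_from_all}), the run-reconstruction argument (Lemma~\ref{lem:can_construct_new_run}), and the same termination and at-most-one-fallback validity arguments. There is nothing substantive to distinguish your approach from the paper's.
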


Combining this with Lemma~\ref{lem:HOinAMP} and Lemma~\ref{lem:amp_in_sfho} implies:

\begin{theorem}
	\label{thm:colored_f=1}
	For $f \le 1$ and $n > 2f$, $\ho_f$ and $\amp_f$ solve the same set of colored tasks.
\end{theorem}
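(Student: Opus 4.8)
The plan is to prove the equivalence by establishing its two inclusions separately, reusing the machinery already assembled in the preceding sections. For the direction that every colored task solvable in $\ho_f$ is also solvable in $\amp_f$, I would simply invoke Lemma~\ref{lem:HOinAMP}, which delivers this inclusion for all $0 \le f \le n$ with no coloring or threshold hypothesis whatsoever. So essentially no new work is needed on this side, and the entire burden falls on the converse.

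For the converse—every colored task solvable in $\amp_f$ is solvable in $\ho_f$—I would route the argument through the intermediate model $\sfho_f$ and compose two simulations. First, Lemma~\ref{lem:amp_in_sfho} gives, under the hypothesis $n > 2f$, that any task solvable in $\amp_f$ is solvable in $\sfho_f$. Second, Lemma~\ref{lem:sfho_in_ho_colored} gives, for $f \le 1$ and $n > 2f$, that any colored task solvable in $\sfho_f$ is solvable in $\ho_f$ (the degenerate case $f = 0$ being trivial since $\sfho_0$ coincides with $\ho_0$). Chaining these two inclusions yields the result, and the hypotheses of both lemmas are precisely the standing assumptions $f \le 1$ and $n > 2f$, so the composition is immediate.

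The substantive content—and where I expect the only real difficulty to lie—is concentrated in the $\sfho_f \hookrightarrow \ho_f$ step of Lemma~\ref{lem:sfho_in_ho_colored}, since the $\amp_f \hookrightarrow \sfho_f$ step is a generic broadcast-with-acknowledgement simulation. The delicate issue is that a silenced process is declared \emph{faulty} in $\sfho_f$ and so need not decide, yet in $\ho_f$ every process must produce an output. The restriction $f \le 1$ is exactly what makes this tractable: by Lemma~\ref{lem:at_most_f_silenced} at most one process is silenced, so every other process decides directly from its simulated copy $p_j$, and the lone silenced process—which by Lemma~\ref{lem:silenced_node_hears_from_all} holds essentially complete, only two-round-stale information about the whole system—can locally reconstruct a consistent extension of the run and select an output compatible with every other process's decision (Lemma~\ref{lem:can_construct_new_run}). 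I would stress that this coordination of a \emph{single} outstanding decision is precisely the point at which $f \le 1$ is indispensable: once $f > 1$, several uncoordinated silenced processes could each pick individually plausible but jointly invalid outputs, which is exactly the obstruction exploited by the separation result (Theorem~\ref{thm:colored_f>1}).
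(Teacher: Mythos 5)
Your proposal is correct and follows exactly the paper's own route: Lemma~\ref{lem:HOinAMP} for the $\ho_f \hookrightarrow \amp_f$ direction, and the composition of Lemma~\ref{lem:amp_in_sfho} with Lemma~\ref{lem:sfho_in_ho_colored} (handling $f=0$ as the degenerate case where $\sfho_0$ and $\ho_0$ coincide) for the converse. Your identification of where the real work lies—the single-silenced-process coordination in Lemma~\ref{lem:sfho_in_ho_colored}, enabled by Lemmas~\ref{lem:at_most_f_silenced}, \ref{lem:silenced_node_hears_from_all}, and~\ref{lem:can_construct_new_run}—also matches the paper's development.
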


\subsection{Separation with more than one fault}
\label{sec:AMPneHO}

Algorithm~\ref{algo:decision_silenced_node}
cannot be generalized to the case where $f > 1$,
since it is possible to have a run with two silenced processes
such that they are the only ones hearing from themselves.
Since the two silenced processes will never have information
about each other, getting them to decide compatible outputs is a challenge due to symmetry.
Indeed, this prevents $\ho_f$ and $\amp_f$ from coinciding
for colored tasks and $f > 1$.
We demonstrate this with the renaming task~\cite{attiya_renaming_amp_1990}.

The \emph{renaming task} with initial name space of
potentially unbounded size $M$
and new name space of size $N$ and $M > N$ is defined as follows.
Every process initially has as input a distinct identifier from the initial name space.
Every process with a non-$\bot$ output must return a distinct output name from the new name space.
To avoid trivial protocols for this task (e.g., for $N = n$, the protocol where process $p_i$ always chooses $i$ as output), we use the \emph{anonymity assumption} on renaming protocols used in~\cite{attiya_renaming_amp_1990}.
In particular, we can assume that the initial state of a process is a function of only its initial name and that all processes have the same decision function $\delta$ that maps the sequence of views at a process to an output name.

The anonymity assumption states that for any pair of processes and any input name $i$, both the processes have the same initial state when given $i$ as input.
Furthermore, let $\pi$ be a permutation of $\left\{ 1, \dots, n \right\}$ and let $J$ and $J'$ be two initial configurations such that for any $1 \le i \le n$, processes $p_i$ in $J$ and $p_{\pi(i)}$ in $J'$ have the same state.
Let $R$ be a run of the renaming protocol starting from $J$.
Consider $\pi(R)$, also a run of the protocol, starting from $J'$ such that the sequence of events executed by $p_i$ in $R$ is instead executed by $p_{\pi(i)}$ in $\pi(R)$ for all $i$.
Then $p_i$ in $R$ and $p_{\pi(i)}$ in $\pi(R)$ must have the same sequence of states.

\begin{theorem}
	\label{thm:colored_f>1}
	For $1 < f < n/2$, the renaming task with an initial name space of size $N + n - 1$ 
    and new name space of size $N = n + f$ is solvable in $\amp_f$ 
    but is not solvable in $\ho_f$.
\end{theorem}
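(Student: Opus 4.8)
The plan is to prove the two directions separately. Solvability in $\amp_f$ is the classical optimal renaming result, so I would invoke the standard $(n+f)$-renaming protocol of~\cite{attiya_renaming_amp_1990}, which assigns distinct names from $\{1,\dots,n+f\}$ to all deciding processes and tolerates up to $f$ crashes precisely when $n>2f$ (so that quorums of $n-f$ processes pairwise intersect). Since the initial name space has size $N+n-1>N=n+f$, it is large enough, and the protocol respects the anonymity assumption. This direction is essentially a citation.

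The impossibility in $\ho_f$ is where the work lies, and I would argue by contradiction: suppose an anonymous protocol $\calP$ solves the task in $\ho_f$. Since no process is faulty in $\ho_f$, every process must decide a non-$\bot$ name. I would then fix a single stationary legal $\ho_f$ schedule (the same heard-of sets in every round) that splits the processes into a \emph{middle} cluster $C$ of $n-2$ processes and two distinguished processes $p_a,p_b$, where: each middle process hears from all of $C$ and itself but misses both $p_a$ and $p_b$ (two omissions, legal exactly because $f>1$, as $n-2\ge n-f$); $p_a$ hears from $C$ and itself but misses $p_b$; and $p_b$ hears from $C$ and itself but misses $p_a$. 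In such a schedule the messages of $p_a$ reach only $p_a$ and those of $p_b$ reach only $p_b$, so both are silenced, and crucially no process ever learns the input of $p_a$ or of $p_b$ apart from that process's own: the middle cluster never hears from either, and neither distinguished process hears from the other directly or (since their information does not propagate) indirectly.

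The heart of the argument is an anonymity-based indistinguishability. I would fix the $n-2$ inputs of $C$ once and for all. Under this schedule the entire view sequence of the process in the $p_a$-role depends only on its own input and on the fixed inputs of $C$ — never on the input placed in the $p_b$-role — and symmetrically for $p_b$; moreover swapping $p_a\leftrightarrow p_b$ is an automorphism of the schedule. Hence, by the anonymity assumption that a single decision function $\delta$ governs both roles, there is a well-defined map $g$ sending an input value $w$ (assigned to either distinguished role) to the name that role decides. The domain of $g$ is the set of input values not used by $C$, of size $(N+n-1)-(n-2)=N+1$, while its codomain is the new name space $\{1,\dots,N\}$ of size $N$. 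Since $N+1>N$, the pigeonhole principle yields distinct inputs $x\ne y$ with $g(x)=g(y)$. Assigning $x$ to $p_a$, $y$ to $p_b$, and the fixed values to $C$ gives a legal run in which $p_a$ and $p_b$ are both obligated to decide and both decide the same name, violating the distinctness requirement of renaming — the contradiction.

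The main obstacle is making the indistinguishability airtight: I must verify that the full-information views of the two distinguished roles are genuinely identical functions of their own input under $\delta$ — that is, that the symmetric communication pattern together with anonymity (no dependence on process indices) really collapses both roles to the single map $g$ — and that the constructed schedule is a legal $\ho_f$ schedule in which both distinguished processes must decide. The counting is then immediate. It is worth noting that the initial name space size $N+n-1$ is calibrated exactly so that the available inputs outnumber the names by one, and that $f>1$ is precisely what makes the middle processes' double omission legal, mirroring why the construction — and hence the separation — fails when $f\le 1$, where the equivalence results instead hold.
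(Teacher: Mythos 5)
Your proposal is correct and follows essentially the same route as the paper's proof: cite the classical $(n+f)$-renaming protocol of~\cite{attiya_renaming_amp_1990} for $\amp_f$, then construct the same stationary $\ho_f$ schedule (a middle cluster of $n-2$ processes that never hears the two distinguished processes, each of which hears only the cluster and itself), and conclude via anonymity that each distinguished process's decision is a fixed function of its own input, so the pigeonhole count $N+1 > N$ forces a name collision. Your map $g$ is exactly the paper's composition $\delta \circ f$, and your remarks on why $f>1$ is needed for legality match the paper's reasoning.
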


\begin{proof}
	For $n > 2f$, the renaming task with an unbounded initial name space
	and a new name space of size $N = n + f$ is solvable in
	$\amp_f$~\cite[Theorem 7.9]{attiya_renaming_amp_1990}.
	Trimming the size of the initial name space to $N + n - 1$ clearly preserves the result.
	It is worth noting that the specifications of the $\amp_f$ model in \cite{attiya_renaming_amp_1990} differ in that all messages sent to non-faulty processes are required to be received.
	However it can be checked that the protocol works even if
	we weaken this by assuming Faulty Quasi-Liveness.

	To show the separation, assume by way of contradiction
	that there is a protocol $\calP$ solving the task in $\ho_f$.
	We show how to obtain an input vector on which $\calP$ fails.
	Fix arbitrary distinct initial names $x_1, \dots, x_{n - 2}$ for processes $p_1, \dots, p_{n - 2}$.
	Let processes $p_{n - 1}$ and $p_n$ have initial names $a$ and $b$ that will be chosen later.
	Consider a run of $\calP$ in $\ho_f$ with input vector $(x_1, \dots, x_{n - 2}, a, b)$ such that: (1) $p_1, \dots, p_{n - 2}$ hear from each other in every round, (2) $p_{n - 1}$ and~$p_n$ are not heard of by any other process in any round, and (3) $p_{n - 1}$ and $p_n$ get to hear from $p_1, \dots, p_{n - 2}$ in every round.

	With $x_1, \dots, x_{n - 2}$ fixed, the sequence of views of $p_{n - 1}$ is the value of some function $f(a)$ that does not depend on $b$, and its new name is some value $\delta(f(a))$.
	The sequence of views of $p_{n - 1}$ and $p_n$ are symmetric, so the sequence of views of $p_{n}$ is $f(b)$ and its new name is $\delta(f(b))$.
	As there are $N + 1$ initial names apart from $x_1, \dots, x_{n - 2}$ and only $N$ possible new names, there exist values for $a, b$ such that such that $x_1, \dots, x_{n - 2}, a, b$ are distinct and $\delta(f(a)) = \delta(f(b))$.
	Thus~$\calP$ fails on the input vector $(x_1, \dots, x_{n - 2}, a, b)$; a contradiction.
\end{proof}

\section{Randomized protocols with a non-adaptive adversary} \label{sec:randomized}

In this section, we extend our study to randomized protocols under
a \emph{non-adaptive adversary},
which fixes process crashes and message omissions independently
of the protocol’s random choices.
Randomization can break symmetry among processes, but it cannot repair lost communication:
once two processes are silenced with respect to each other,
their random choices evolve independently and their outputs may diverge.
We prove that the equivalence between $\amp_f$ and $\ho_f$
for colorless tasks persists in this setting,
while the separation for colored tasks remains.
Thus, the expressive boundary between the two models is \emph{structural},
determined by the flow of information rather than by deterministic constraints.

More specifically, we assume processes are equipped with their own local coins,
which they may flip an arbitrary number of times during the local computation of each step.
Together with an initial configuration and schedule, the sequences of local coin flips determine a run.
An overview of randomized protocols can be found in~\cite{Aspnes03AsyncCons}.

We assume that the scheduling of processes and messages is handled by
a \emph{non-adaptive adversary}, which must fix the schedule of the run in advance.
Once fixed, the results of the coin flips are the only remaining unknowns
that determine the run of the protocol.
We also alter the definition of task solvability in Section~\ref{sec:models}
by replacing the second condition that says that all non-faulty processes must decide.
Instead, we require that once the adversary fixes the scheduling,
all non-faulty processes must decide with probability 1.
The first condition remains unchanged, i.e.,
the output vector of a run always has to satisfy the task specification.

The following randomized versions of
Theorems~\ref{thm:colorless} and~\ref{thm:colored_f=1}
are proved by identical simulation constructions,
detailed in Appendix~\ref{randomized_appendix}.
The only difference is in showing almost-sure termination.

\begin{theorem}\label{thm:colorless_rand}
	For $n > 2f$, $\amp_f$ and $\ho_f$ solve the same set of randomized colorless tasks.
\end{theorem}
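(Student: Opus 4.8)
The plan is to mirror the deterministic proof of Theorem~\ref{thm:colorless}, using the same three-part decomposition through the intermediate model $\sfho_f$, and to show that the only place where randomization matters is the termination argument, which must now be stated as almost-sure termination under the fixed non-adaptive schedule. First I would establish the two ``easy'' directions exactly as before. The inclusion of $\ho_f$ in $\amp_f$ (Lemma~\ref{lem:HOinAMP}) is a purely syntactic simulation of canonical rounds and carries over verbatim once coin flips are treated as part of each process's local computation; since the adversary is non-adaptive, the simulation does not need to react to random choices, so the schedule translation is unchanged. Likewise, the simulation of $\amp_f$ in $\sfho_f$ (Lemma~\ref{lem:amp_in_sfho}) is deterministic in its message-handling and acknowledgement logic, so I would re-run it with the understanding that each simulated $\amp_f$-step may now include local coin flips; the structural properties (Non-faulty Liveness, Faulty Quasi-Liveness, Integrity, No Duplicates) depend only on the flow of messages fixed by the adversary, not on the coins, so validity of the simulated run is preserved run-by-run.

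The substantive step is the passage from $\sfho_f$ to $\ho_f$ for colorless tasks, adapting the copying construction and Lemmas~\ref{lem:all_processes_decide} and~\ref{lem:colorless_property_satisfied}. The validity lemma (Lemma~\ref{lem:colorless_property_satisfied}) is unaffected: any decided output of $\calQ$ is either an output that $\calP$ itself produced or a copy of such an output, so $val(O_{\calQ}) \subseteq val(O_{\calP})$ holds for every run regardless of the coins, and colorlessness gives $(I, O_{\calQ}) \in \Delta$ deterministically. The termination lemma is where I expect the real work, and I would restate it as: once the adversary fixes the schedule, every process decides with probability~$1$. By Lemma~\ref{lem:at_most_f_silenced} the fixed schedule already guarantees at least $n - f$ non-silenced processes, so by Lemma~\ref{lem:all_see_non_sil} their reach is all of $\calP$. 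The point is that each such process runs the randomized protocol $\calP$, which solves the task in $\sfho_f$ and hence terminates with probability~$1$ under the induced schedule; since the schedule is fixed in advance, the induced $\sfho_f$-run is a well-defined random object and almost-sure termination of $\calP$ transfers directly. Once some non-silenced process decides, Weak Liveness forces every remaining process to hear from a decided process within one further round and copy its output, so all processes decide almost surely.

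The main obstacle, and the step I would treat most carefully, is ensuring that ``solves in $\sfho_f$ with probability~$1$'' is invoked against exactly the schedule that the non-adaptive adversary has committed to. Because the adversary is non-adaptive, the schedule is a fixed (coin-independent) object, so the random run of $\calQ$ projects onto a random run of $\calP$ under a \emph{fixed} $\sfho_f$-schedule; this is precisely the setting in which the randomized solvability hypothesis for $\calP$ grants almost-sure termination, and no measurability subtlety arises from the adversary observing coins. I would flag explicitly that this argument fails for an adaptive adversary, since there the induced $\sfho_f$-schedule would itself depend on the coins and could not be fixed in advance — consistent with the remark in the introduction that the equivalence results remain open in the adaptive case. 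Assembling the three directions then yields the claimed equivalence for randomized colorless tasks, completing the proof.
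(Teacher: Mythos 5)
Your proposal matches the paper's proof essentially verbatim: the same three simulations are reused without modification, validity is observed to hold run-by-run independently of the coins, and the only new work is almost-sure termination, argued exactly as you do — the non-adaptive adversary fixes the schedule of $\calQ$ and hence of $\calP$, so $\calP$'s almost-sure termination transfers to the non-silenced processes, after which the copying rule finishes the job. One wording nit: ``once some non-silenced process decides'' should be ``once all $n-f > f$ non-silenced processes have decided,'' since an undecided process may miss up to $f$ messages per round and Weak Liveness only forces it to hear from a decided process when more than $f$ processes have decided; your preceding sentence already supplies this bound, so the fix is purely editorial.
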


\begin{theorem}\label{thm:colored_f=1_rand}
	For $f \le 1$ and $n > 2f$, $\amp_f$ and $\ho_f$ solve the same set of randomized colored tasks.
\end{theorem}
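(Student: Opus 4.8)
The plan is to reuse verbatim the three simulation constructions underlying the deterministic Theorem~\ref{thm:colored_f=1}: the simulation of $\ho_f$ in $\amp_f$ (Lemma~\ref{lem:HOinAMP}), of $\amp_f$ in $\sfho_f$ (Lemma~\ref{lem:amp_in_sfho}), and of $\sfho_f$ in $\ho_f$ for colored tasks with $f \le 1$ (Lemma~\ref{lem:sfho_in_ho_colored}, via Algorithm~\ref{algo:decision_silenced_node}). A randomized process simply runs its simulated automaton, flipping its private coins inside the local computation of the simulated steps. Crucially, none of the three simulations uses the random \emph{payload} of a message to decide when to advance a round or release a message: advancement and release are governed only by message counts and by acknowledgements whose arrival times are functions of the heard-of sets, hence coin-independent. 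Consequently the \emph{validity} condition---that the output vector lies in $\Delta$ in \emph{every} run---is inherited run-by-run exactly as in the deterministic proofs, and the whole burden reduces to re-establishing termination in its almost-sure form. The structural fact that makes this possible is that a non-adaptive adversary fixes the heard-of sets (equivalently, the $\amp_f$ schedule) in advance, independently of the coins; since silencing is defined purely from the receive sets $\textsc{Rcv}_j(r)$, the set of silenced processes, and more generally the schedule induced in each target model, is a coin-independent function of the fixed source schedule.

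First I would dispatch the two ``relay'' directions. Fix a non-adaptive $\amp_f$ schedule; it determines, independently of the coins, which $n - f$ round-tagged messages arrive first at each simulated process, hence a fixed non-adaptive $\ho_f$ schedule. Since the given protocol solves $T$ in $\ho_f$, on this fixed schedule every process decides with probability $1$, so every simulated $p_i$, and hence every non-faulty $q_i$, decides with probability $1$; this gives the randomized $\ho_f \hookrightarrow \amp_f$ direction. Symmetrically, fix a non-adaptive $\sfho_f$ schedule; the acknowledgement-driven unblocking in Algorithm~\ref{algo:AMPinHO_gen} depends only on the heard-of sets, so it induces a fixed non-adaptive $\amp_f$ schedule whose set of faulty processes equals the (coin-independent) set of silenced processes. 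As the given protocol solves $T$ in $\amp_f$, its non-faulty processes decide with probability $1$ on this schedule, and since non-silenced $q_i$ map to non-faulty $p_i$ exactly as in Lemma~\ref{lem:amp_in_sfho}, the required non-silenced processes decide with probability $1$.

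The crux is the randomized analogue of Lemma~\ref{lem:sfho_in_ho_colored} for $f = 1$ (the case $f = 0$ is immediate, since then no process is silenced and $\sfho_0 = \ho_0 = \amp_0$). Because silencing is fixed by the schedule, Lemma~\ref{lem:at_most_f_silenced} still yields at most one silenced process, and the at most $n - 1$ non-silenced processes decide with probability $1$ by the $\amp_f \hookrightarrow \sfho_f$ step just established; a finite intersection of probability-$1$ events again has probability $1$. Conditioned on the event that all non-silenced processes have decided---which has probability $1$---the lone silenced process $q_i$ detects this \emph{deterministically} within two rounds by Lemma~\ref{lem:silenced_node_hears_from_all} and then decides via line~\ref{line:alternatedecision}, so $q_i$ too decides with probability $1$ and all processes terminate almost surely. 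For validity I would re-run the argument of Lemmas~\ref{lem:can_construct_new_run} and~\ref{lem:colored_valid_outputs}: the other $n - 1$ outputs are already decided, hence irrevocable and---by Lemma~\ref{lem:silenced_node_hears_from_all}---known to $q_i$ together with the full input vector $I$, while almost-sure termination of the $\sfho_1$ protocol on the omission-free extension of the run guarantees that the set of coin sequences under which $p_i$ decides is nonempty; any such completing run is valid and agrees with the others' decided outputs, so the set $\{v : (I, (O_{-i}, v)) \in \Delta\}$ is nonempty and $q_i$ chooses a member of it.

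The main obstacle, and the one point that genuinely differs from the deterministic case, is that $q_i$ can no longer reconstruct a concrete future run in which $p_i$ decides, since that future would depend on the \emph{private} coins of the other processes. The resolution I would stress is that $q_i$ never needs to reproduce a real future: the other outputs are already fixed, and $q_i$ only needs the \emph{existence} of some valid completion, which almost-sure termination supplies, after which it selects any output compatible with the task relation restricted to the data it already holds. The accompanying subtlety is to verify that non-adaptivity is genuinely preserved through both relays, so that ``silenced'' is well-defined before any coin is flipped; this is precisely what licenses conditioning on the fixed schedule and treating the single silenced process separately.
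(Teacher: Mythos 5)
Your proposal is correct and takes essentially the same route as the paper: reuse the three simulations unchanged, observe that a non-adaptive adversary fixes the induced schedules (and hence the silenced set) independently of the coins, inherit validity run-by-run, and re-prove only termination in its almost-sure form. One remark: your replacement of the concrete future-run reconstruction in Lemma~\ref{lem:can_construct_new_run} by a pure existence argument (the probability-one set of deciding completions of the omission-free extension is nonempty, since the observed finite coin prefix has positive probability) is in fact more careful than the paper's appendix, which simply asserts that the validity lemmas carry over unchanged.
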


In order to show the randomized counterpart of the separation result
(Theorem~\ref{thm:colored_f>1}),
we will use a probabilistic version of the pigeonhole principle
that shows a positive probability of name collisions of randomized renaming
protocols in~$\ho_f$ (proved in Appendix~\ref{randomized_appendix}).

\begin{lemma}\label{lem:randomized-pigeonhole}
	Let~\(N\ge 1\) be a positive integer and let \(0<c<{1}/{N}\).
	There exists an integer \(L=L(N,c)\) with the following property:
	for every collection \(X_1,\dots,X_L\) of~\(L\) pairwise independent random variables in the set \(\{a_1,\dots,a_N\}\), there exist distinct indices \(a,b \in \{1,\dots,L\}\) such that
	$\mathbb{P}(X_a = X_b) \geq c$.
\end{lemma}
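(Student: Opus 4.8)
The plan is to reduce the statement to a clean second-moment (Cauchy--Schwarz) computation by encoding each random variable through its distribution vector. For each index $a$, let $v_a = (\PP(X_a = a_1), \dots, \PP(X_a = a_N)) \in \RR^N$ be the corresponding probability vector, so that each $v_a$ lies in the simplex: its entries are nonnegative and sum to $1$. The crucial algebraic observation, which is where pairwise independence enters (and where full independence is \emph{not} needed), is that for distinct $a, b$ we have $\PP(X_a = X_b) = \sum_{k=1}^N \PP(X_a = a_k)\,\PP(X_b = a_k) = \langle v_a, v_b \rangle$. Thus the whole problem becomes: among $L$ points of the simplex in $\RR^N$, some two distinct points have inner product at least $c$.

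The key step is to sum these inner products over all $L(L-1)$ ordered pairs of distinct indices and bound the total from below. Writing $w = \sum_a v_a$, I would use the identity $\sum_{a \ne b} \langle v_a, v_b \rangle = \norm{w}^2 - \sum_a \norm{v_a}^2$. Since each $v_a$ is a probability vector, $\norm{v_a}^2 = \sum_k (v_a)_k^2 \le \sum_k (v_a)_k = 1$ (because $(v_a)_k \le 1$), so $\sum_a \norm{v_a}^2 \le L$. On the other hand, the coordinates of $w$ sum to $L$, so Cauchy--Schwarz gives $\norm{w}^2 = \sum_k w_k^2 \ge (\sum_k w_k)^2 / N = L^2/N$. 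Combining, the sum of pairwise collision probabilities is at least $L^2/N - L$.

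An averaging argument then extracts a single good pair: there must exist distinct $a, b$ with $\PP(X_a = X_b) = \langle v_a, v_b\rangle \ge (L^2/N - L)/(L(L-1)) = (L-N)/\bigl(N(L-1)\bigr)$. This quantity increases to $1/N$ as $L \to \infty$, so since $c < 1/N$ I only need $L$ large enough that $(L-N)/(N(L-1)) \ge c$. Solving, this holds precisely when $L(1 - cN) \ge N(1 - c)$, and as $c < 1/N$ forces $1 - cN > 0$, it suffices to take $L = L(N, c) = \max\bigl(2, \ceil{N(1-c)/(1 - cN)}\bigr)$, the $\max$ with $2$ ensuring a genuine pair of distinct indices exists.

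The argument has no serious obstacle; the only real subtlety is the reduction itself, namely recognizing that pairwise independence is exactly what turns the collision probability into the inner product $\langle v_a, v_b\rangle$, after which the Cauchy--Schwarz lower bound and the averaging step are routine. The one point requiring care is that the extremal value $1/N$ is approached but never reached for finite $L$, which is why the strict inequality $c < 1/N$ is essential and why the threshold $L(N,c)$ must be chosen as above rather than taking any universal constant.
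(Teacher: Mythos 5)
Your proof is correct, and it takes a genuinely different route from the paper's. Both arguments begin with the same reduction: pairwise independence turns each collision probability into the inner product of the two distribution vectors in the simplex, so the problem becomes a statement about $L$ points of $\Delta^{N-1}$. From there the paper argues by contradiction and compactness: if all pairwise inner products were below $c$, then (using $\norm{x}_2^2 \ge 1/N$ for simplex points) all pairs would be separated by the fixed distance $d = \sqrt{2(1/N - c)}$, and a $d$-separated subset of the compact simplex must have bounded size, so some $L = L(N,c)$ exists --- non-constructively in the main proof, with only a crude packing bound $\bigl(1 + 2\sqrt{N}/d\bigr)^{N-1}$, exponential in $N$, appended as a remark. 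Your argument is instead a direct first-moment/averaging computation: summing all $L(L-1)$ pairwise inner products, bounding $\sum_a \norm{v_a}^2 \le L$ and $\norm{\sum_a v_a}^2 \ge L^2/N$ by Cauchy--Schwarz, and averaging yields a pair with collision probability at least $(L-N)/\bigl(N(L-1)\bigr)$, which exceeds $c$ once $L(1-cN) \ge N(1-c)$. This buys you three things the paper's proof does not provide: it is entirely elementary (no topology), it is constructive, and it gives an explicit threshold $L = \max\bigl(2, \ceil{N(1-c)/(1-cN)}\bigr)$ that is dramatically smaller than the paper's exponential packing estimate. Interestingly, the two proofs also use opposite halves of the same geometry: the paper needs the lower bound $\norm{x}_2^2 \ge 1/N$ on individual vectors, while you need the upper bound $\norm{v_a}_2^2 \le 1$ together with the Cauchy--Schwarz lower bound on the norm of the sum. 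Your closing caveat --- that $(L-N)/(N(L-1))$ approaches but never attains $1/N$, so strictness of $c < 1/N$ is what makes a finite $L$ suffice --- is exactly the right observation and mirrors the role that positivity of $1/N - c$ plays in the paper's separation distance.
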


To demonstrate separation, we will use the renaming task as in the deterministic version, only with a potentially larger initial name space because of Lemma~\ref{lem:randomized-pigeonhole}.

\begin{theorem}
	\label{thm:colored_f>1:randomized}
	For $1 < f < n/2$, there exists $L \ge 2$ such that the renaming task with initial name space of size $n - 2 + L$ and new name space of size $N = n + f$ is solvable in $\amp_f$ but not in $\ho_f$.
\end{theorem}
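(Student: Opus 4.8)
The plan is to adapt the deterministic separation argument from Theorem~\ref{thm:colored_f>1}, replacing the deterministic pigeonhole step by the probabilistic version in Lemma~\ref{lem:randomized-pigeonhole}. First I would invoke solvability in $\amp_f$: the renaming protocol of~\cite{attiya_renaming_amp_1990} works with any initial name space, so in particular for one of size $n - 2 + L$, and under the weaker Faulty Quasi-Liveness condition as already noted; this direction is essentially unchanged. The substance is again the impossibility in $\ho_f$, which I would establish by contradiction: assume a randomized protocol $\calP$ solves the task in $\ho_f$ against a non-adaptive adversary.

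Next I would fix the same adversarial schedule as before. Fix distinct initial names $x_1, \dots, x_{n-2}$ for $p_1, \dots, p_{n-2}$, and let $p_{n-1}, p_n$ receive names $a, b$ to be chosen. The adversary schedules so that (1) $p_1, \dots, p_{n-2}$ hear from one another every round, (2) neither $p_{n-1}$ nor $p_n$ is ever heard by anyone else, and (3) both $p_{n-1}$ and $p_n$ hear from $p_1, \dots, p_{n-2}$ every round. Because this schedule is fixed in advance (non-adaptive) and because $p_{n-1}$ and $p_n$ are silenced with respect to each other, their views never reflect each other's coin flips; the coins of $p_{n-1}$ and $p_n$ are independent and each is independent of the other process's view sequence. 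Consequently, for each candidate name $a$, the output of $p_{n-1}$ is a random variable $X(a) = \delta(\text{view sequence of } p_{n-1})$ taking values in the new name space of size $N$, and by symmetry of the schedule $p_n$'s output for name $b$ is an identically distributed copy $X(b)$, with $X(a)$ and $X(b)$ independent whenever $a \neq b$.

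Now I would apply Lemma~\ref{lem:randomized-pigeonhole} with the given $N$ and a fixed constant $0 < c < 1/N$ to obtain the integer $L = L(N, c)$; this is precisely the $L$ that sizes the initial name space. Taking the $L$ candidate names $\{a_1, \dots, a_L\}$ (disjoint from $x_1, \dots, x_{n-2}$) gives $L$ random variables $X_1, \dots, X_L$ valued in the $N$-element new name space, pairwise independent by the independence of the coins. The lemma then yields distinct indices $a, b$ with $\PP(X(a) = X(b)) \geq c > 0$. Choosing these as the inputs for $p_{n-1}$ and $p_n$ produces a positive-probability event in which both silenced processes decide the same new name, violating distinctness; since $p_{n-1}, p_n$ must decide with probability $1$, this is a genuine collision with positive probability, contradicting that $\calP$ solves the task. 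Thus no such protocol exists and the separation holds.

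The main obstacle is justifying the independence and identical-distribution structure of $X(a)$ and $X(b)$ rigorously. I must argue that, under the fixed non-adaptive schedule, the view sequence of $p_{n-1}$ is a deterministic function of its own input $a$, the inputs $x_1, \dots, x_{n-2}$, and the coins of $p_{n-1}, p_1, \dots, p_{n-2}$ — and crucially does \emph{not} depend on $b$ or on $p_n$'s coins, since $p_{n-1}$ never hears from $p_n$ and no third party relays $p_n$'s information back (they are silenced). The symmetry of conditions (2)–(3) gives identical distribution, and the disjointness of the private coins of $p_{n-1}$ and $p_n$ gives the pairwise independence of the $X_i$ needed to invoke the lemma. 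A careful statement is that each $X_i$ depends on $p_{n-1}$'s (or $p_n$'s) private coins plus the shared coins of $p_1, \dots, p_{n-2}$; conditioning on the shared coins makes the family genuinely independent, and since Lemma~\ref{lem:randomized-pigeonhole} only needs pairwise independence, the conditional collision bound integrates to the same unconditional bound, preserving $\PP(X(a) = X(b)) \geq c$.
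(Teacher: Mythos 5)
Your proposal is correct and follows essentially the same route as the paper's proof: the same adversarial schedule carried over from Theorem~\ref{thm:colored_f>1}, conditioning on the shared coins of $p_1,\dots,p_{n-2}$ so that the two silenced processes' outputs become conditionally independent, and then Lemma~\ref{lem:randomized-pigeonhole} to produce two initial names whose outputs collide with positive probability, contradicting almost-sure distinctness. The only differences are cosmetic: the paper applies the lemma to the conditional output space of size $k = N-(n-2)$ with $c = 1/2k$ and reaches the contradiction $1 \le 1 - \frac{1}{2k}$ by integrating over the shared coins, whereas you use the full name space of size $N$ and any $c < 1/N$; and your intermediate claim that $X(a)$ and $X(b)$ are unconditionally independent is inaccurate (both depend on the coins of $p_1,\dots,p_{n-2}$), but you identify and repair this yourself via conditioning, exactly as the paper does.
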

\begin{proof}
	The task is solvable for any $L \ge 2$ in $\amp_f$ even by a deterministic protocol.
	Assume for all $L \ge 2$ that there is a randomized protocol solving it in $\ho_f$ against a non-adaptive adversary.

	We choose the same message schedule and input vector $(x_1, \dots, x_{n - 2}, a, b)$ as in the proof of Theorem~\ref{thm:colored_f>1}, with $a$ and $b$ to be chosen later.
	Let~$R$ be the set of runs having the chosen input vector and schedule.
	In run $r\in R$, let~$c'(r)$ be the sequence of random choices of the processes $p_1, \dots, p_{n-2}$ and let $c_{n - 1}(r)$ and $c_n(r)$ be those of $p_{n - 1}$ and $p_n$.
	Write~$\mathcal{C}'$ for the set of sequences of random choices~$c'(r)$.
	Once $x_1, \dots, x_{n - 2}$ are fixed, the new names of $p_{n - 1}$ and $p_n$ depend on their own initial names, their own sequences of random choices and those of $p_1, \dots, p_{n - 2}$, and thus can be written as $\delta(a, c_{n - 1}(r), c'(r))$ and $\delta(b, c_{n}(r), c'(r))$.
	We use the fact that all processes almost-surely decide distinct non-$\bot$ values and the law of total probability~\cite[Theorem 34.4]{Billingsley1995Prob} to obtain
	\begin{align}\label{eq:total:prob}
		1 & = \mathbb{P}\big( \delta(a, c_{n - 1}(r) ,c'(r)) \neq \delta(b, c_n(r), c'(r)) \big) \nonumber                                    \\
		  & = \int_{\mathcal{C}'} \mathbb{P}(\delta(a, c_{n - 1}(r), c'(r)) \neq \delta(b, c_n(r), c'(r)) \mid c'(r) = c')\, d\mathbb{P}(c').
	\end{align}
	Once $c'(r)$ is fixed to $c'$ and values for $a$ and $b$ are chosen, the new names $\delta(a, c_{n - 1}(r), c')$ and $\delta(b, c_n(r), c')$ are independent random variables because of $c_{n - 1}(r)$ and $c_n(r)$ being independent.
	Moreover they must be chosen from a set of $k = N - (n - 2)$ names.
	By setting $c = 1/2k$ in Lemma~\ref{lem:randomized-pigeonhole}, there exists a choice of initial names $a$ and $b$ from a set of $L = L(N, c)$ names such that
	\begin{equation}\label{eq:conflict:prob}
		\mathbb{P}(\delta(a, c_{n - 1}(r), c'(r)) \neq \delta(b, c_{n}(r), c'(r)) \mid c'(r) = c') \leq 1 - \frac{1}{2k}.
	\end{equation}
	Combining~\eqref{eq:total:prob} and~\eqref{eq:conflict:prob} gives
	\begin{equation}
		1
		\leq
		\left( 1 - \frac{1}{2k} \right) \int_{\mathcal{C}'} d\mathbb{P}(c')
		=
		1 - \frac{1}{2k}
		<
		1
		\enspace,
	\end{equation}
	a contradiction.
\end{proof}

\section{Summary and Discussion}  \label{sec:disc}

This work establishes a precise correspondence between $\amp_f$, the asynchronous message-passing model with crash failures, and $\ho_f$, the round-based Heard-Of model with message omissions.
The two coincide for colorless tasks when $n > 2f$, and for colored tasks only when $f = 1$ (and $n > 2$).
The distinction stems from \emph{silenced processes} in $\ho_f$:
when several processes become permanently unheard, their views diverge and they may reach incompatible decisions.
With a single such process, compatibility can still be preserved; with more, it cannot.
The results extend to randomized algorithms with non-adaptive adversaries, indicating that the limitations arise from communication structure rather than determinism.

The notion of silencing offers a new perspective on the limits of
coordination in asynchronous systems, showing that failures of propagation,
rather than timing, define what is computable.
One open question involves understanding the case $n \leq 2f$.
While it initially appears that the landscape of solvable problems in both $\amp_f$ and $\ho_f$ might only contain trivial tasks,
it turns out that $(n - 1)$-set agreement is solvable in both for $f = n - 2$:
this is done by doing one round of exchanging input values and deciding on the minimal value received.
Other open directions include identifying broader task classes for which the models coincide,
and clarifying the role of the intermediate model $\sfho_f$
for colored tasks with $f > 1$ and $n > 2f$.
More broadly, reasoning in terms of information propagation
may help reveal similar boundaries in models of
partial synchrony and Byzantine behavior.

%
%


\subsubsection*{Acknowledgements} Dhrubajyoti Ghosh and Thomas Nowak are supported by the ANR projects DREAMY (ANR-21-CE48-0003) and COSTXPRESS (ANR-23-CE45-0013).
Hagit Attiya is supported by the Israel Science Foundation (grant number 25/1849).
Armando Casta\~neda is supported by research projects DGAPA-PAPIIT IN108723 and IN103126, and SECIHTI CBF-2025-I-393.

\bibliography{refs}

\appendix
\section{Proofs for Section~\ref{sec:randomized}} \label{randomized_appendix}

We first prove Theorems~\ref{thm:colorless_rand} and~\ref{thm:colored_f=1_rand}.
\begin{theorem*}
    For $n > 2f$, $\amp_f$ and $\ho_f$ solve the same set of randomized colorless tasks.
\end{theorem*}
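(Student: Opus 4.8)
The plan is to reuse the three deterministic simulation constructions from the proof of Theorem~\ref{thm:colorless} verbatim, observing that each is a per-run, coin-oblivious transformation, and then to re-establish only the termination guarantee in its probabilistic form. First I would note that the simulations of Lemmas~\ref{lem:HOinAMP} and~\ref{lem:amp_in_sfho}, together with the colorless reduction of Section~\ref{subsec:colorless}, translate the schedule of one model into the schedule of the other by a deterministic rule that never inspects the processes' coin flips. Consequently a non-adaptive adversary in the source model induces a non-adaptive adversary in the target model: the induced schedule is fixed in advance as a deterministic function of the original fixed schedule. Moreover the safety condition $(I,O)\in\Delta$ is a property of each individual run, and its verification in the deterministic proof never used termination, so it carries over unchanged: every run of the constructed protocol yields a valid output vector regardless of the coin outcomes.

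It then remains to establish almost-sure termination. For the easy direction ($\ho_f$ in $\amp_f$) I would argue that the heard-of pattern induced by waiting for $n-f$ messages has at most $f$ omissions per round and is fixed in advance, so the simulated execution is one against which the randomized $\ho_f$ protocol terminates with probability $1$; hence the $\amp_f$ protocol does too.

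For the converse direction I would compose the two reductions. By Lemma~\ref{lem:at_most_f_silenced} at most $f$ processes are silenced when $n>2f$, and a non-silenced simulating process corresponds to a non-faulty simulated $\amp_f$ process (as in Lemma~\ref{lem:amp_in_sfho}); since the induced $\amp_f$ adversary is non-adaptive with at most $f$ crashes, the randomized $\amp_f$ protocol guarantees that every non-silenced process decides with probability $1$. The decision times form finitely many random variables, each finite almost surely, so almost surely there is a round $R'$ by which all $\ge n-f$ non-silenced processes have decided. The colorless reduction then has each still-undecided (hence silenced) process copy a decided value: because at most $f$ processes are silenced and $n-f>f$, Weak Liveness forces every process to hear from at least one decided, non-silenced process in round $R'+1$, so it copies and decides. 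This final copying step is deterministic and adds only one round, so all processes decide almost surely, as required.

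The main obstacle I anticipate is making the composition of almost-sure guarantees through the layered simulation precise: in particular, verifying that non-adaptivity is genuinely preserved (the induced schedule must not secretly depend on coin flips) and that the termination event in the source model pulls back to a probability-$1$ event in the target model under the run correspondence. Once these bookkeeping facts are checked, everything else is inherited verbatim from the deterministic argument; the reduction to $\sfho_f$ and the colorless copying rule introduce no new randomness, so no genuinely new probabilistic estimate beyond ``a finite sum of almost-surely-finite decision times is almost surely finite'' is needed.
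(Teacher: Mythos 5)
Your proposal matches the paper's proof in both structure and substance: the same three simulations are reused verbatim, the key observation that the non-adaptive adversary's fixed schedule for the simulating protocol deterministically fixes the simulated protocol's schedule (so non-adaptivity is preserved and safety carries over per run), and termination is re-established exactly as you describe --- non-silenced processes inherit probability-$1$ decision from the simulated $\amp_f$ protocol via Lemmas~\ref{lem:amp_in_sfho} and~\ref{lem:at_most_f_silenced}, and the remaining (silenced) processes decide by the colorless copying rule and Weak Liveness. The bookkeeping concerns you flag are precisely the points the paper addresses, so no gap remains.
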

\begin{theorem*}
    For $f \le 1$ and $n > 2f$, $\amp_f$ and $\ho_f$ solve the same set of randomized colored tasks.
\end{theorem*}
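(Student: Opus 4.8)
The plan is to reuse the three simulations from the deterministic proofs verbatim—Lemma~\ref{lem:HOinAMP} for $\ho_f$ in $\amp_f$, Lemma~\ref{lem:amp_in_sfho} for $\amp_f$ in $\sfho_f$, and the colorless construction of Section~\ref{subsec:colorless} together with the colored construction of Algorithm~\ref{algo:decision_silenced_node} for $\sfho_f$ in $\ho_f$—observing that coin flips simply enter the local computation that each simulation already treats as a black box. Validity is a property of each individual run and is established exactly as before: Lemmas~\ref{lem:colorless_property_satisfied} and~\ref{lem:colored_valid_outputs}, and the per-run arguments inside Lemmas~\ref{lem:HOinAMP} and~\ref{lem:amp_in_sfho}, never invoke determinism, so $(I, O_\calQ) \in \Delta$ continues to hold for every outcome of the coins. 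The only condition that must be re-proved is almost-sure termination.

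The first thing I would isolate is the observation that every simulation maps schedules \emph{coin-independently}: the Heard-Of sets (respectively the message-delivery structure) of the induced source run are determined entirely by the target schedule that the non-adaptive adversary fixes, and never by the random choices. Consequently a non-adaptive adversary in the target model induces a non-adaptive adversary in the source model, the set of silenced (respectively crashed) processes is fixed before any coin is flipped, and almost-sure termination of the simulated protocol transfers directly to the simulating one. Concretely, in Lemma~\ref{lem:HOinAMP} waiting for $n-f$ round-$r$ messages fixes the $\cfho_f$ Heard-Of sets from the $\amp_f$ schedule; in Lemma~\ref{lem:amp_in_sfho} the acknowledgement mechanism makes the crash pattern of the simulated $\amp_f$ run coincide with the (fixed) silenced set; and in the colorless case at least $n-f$ non-silenced processes decide almost surely, inheriting $\calP$'s termination, after which every remaining process copies a decision deterministically within one round by Weak Liveness.

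The main obstacle will be the colored $f=1$ decision rule of Algorithm~\ref{algo:decision_silenced_node}, the one place where the argument genuinely departs from the deterministic one, since there a silenced $q_i$ must synthesize its own output rather than copy one. Here I would use two facts. First, because the protocol is full-information, every broadcast view carries the sender's entire local history, including its coin flips; combined with Lemma~\ref{lem:silenced_node_hears_from_all}, once the (at most one) silenced $q_i$ reaches a round $r \ge R'+2$ it knows the round-$R'$ views—hence the actual coins and the already-fixed outputs—of all other processes. Second, $q_i$ reconstructs the run $\alpha'$ as in Lemma~\ref{lem:can_construct_new_run}, matching the others' views and coins up to their decision round and assuming full delivery thereafter, but now continues the simulation of $p_i$ using \emph{fresh local coins}. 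Since $\alpha'$ has no silenced process, $\calP$ terminates on its fixed full-delivery schedule with probability $1$, so $p_i$ decides after finitely many simulated rounds almost surely; and since $\alpha'$ is a legitimate run of $\calP$ whose other components equal their already-fixed outputs, task validity guarantees that $q_i$'s chosen output is compatible with them. This restores Lemmas~\ref{lem:can_construct_new_run} through~\ref{lem:colored_valid_outputs} in the randomized setting.

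Assembling the pieces in the order $\ho_f$ into $\amp_f$ (Lemma~\ref{lem:HOinAMP}), $\amp_f$ into $\sfho_f$ (Lemma~\ref{lem:amp_in_sfho}), and $\sfho_f$ into $\ho_f$ then yields both directions of each equivalence, proving the randomized Theorems~\ref{thm:colorless_rand} and~\ref{thm:colored_f=1_rand}.
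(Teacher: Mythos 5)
Your proposal is correct and follows essentially the same route as the paper: reuse the three simulations unchanged, note that output validity is a per-run property, observe that the non-adaptive adversary's target schedule fixes the simulated source schedule (and hence the silenced/faulty sets) independently of the coins, and re-prove only almost-sure termination. Your explicit handling of the silenced process's reconstruction of $\alpha'$ with fresh local coins in the randomized analogue of Lemma~\ref{lem:can_construct_new_run} is in fact more detailed than the paper's appendix, which covers that step by appealing to ``the same ideas'' as Lemma~\ref{lem:colored_termination}.
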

    The randomized versions of Lemmas~\ref{lem:HOinAMP} and~\ref{lem:amp_in_sfho}
    can be proved using the same simulations:
    Every simulated process $p_i$ now uses randomization and the processes $q_i$ are constructed as before.
    The non-adaptive adversary fixes the scheduling of $\calQ$, which also fixes that of $\calP$.
    We can use the same proofs to show that any simulated run of $\calP$ is valid and obtain that any output vector yielded by $\calP$ is valid and that all non-faulty $p_i$ decide with probability 1.
    It follows that any output vector yielded by $\calQ$ is also valid and moreover as it is still true that $p_i$ is non-faulty if $q_i$ is, all non-faulty $q_i$ decide with probability 1.

    To prove the randomized version of the result in Sections~\ref{subsec:colorless} and~\ref{subsec:colored_f=1},
    the same protocol constructions work:
    The proofs of the output vector $O_\calQ$ being valid (Lemmas~\ref{lem:colorless_property_satisfied} and~\ref{lem:colored_valid_outputs}) remain unchanged, and we only need to show the randomized versions of Lemmas~\ref{lem:all_processes_decide} and~\ref{lem:colored_termination}.
    In both the lemmas, if $q_i$ is not silenced, its probability of deciding is at least that of $p_i$, i.e.\, 1.
    Thus $n - f$ processes of $\calQ$ decide in a run with probability 1.
    Then using the same ideas in Lemmas~\ref{lem:all_processes_decide} and~\ref{lem:colored_termination}, we conclude that non-faulty processes in $\calQ$ decide with probability 1.

We next prove Lemma~\ref{lem:randomized-pigeonhole}.
\begin{lemma*}
    Let~\(N\ge 1\) be a positive integer and let \(0<c<{1}/{N}\).
    There exists an integer \(L=L(N,c)\) with the following property:
    for every collection \(X_1,\dots,X_L\) of~\(L\) pairwise independent random variables taking values in the set \(\{a_1,\dots,a_N\}\), there exist distinct indices \(a,b \in \{1,\dots,L\}\) such that
        $\mathbb{P}(X_a = X_b) \geq c$.
\end{lemma*}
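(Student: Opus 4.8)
The plan is to recast the statement in terms of inner products of probability vectors. For each $i$, set $p_{i,k} = \mathbb{P}(X_i = a_k)$ and collect these into a vector $v_i = (p_{i,1}, \dots, p_{i,N}) \in \RR^N$ lying on the probability simplex, so that $\sum_{k} p_{i,k} = 1$ and every coordinate lies in $[0,1]$. The key observation is that for any two \emph{independent} variables $X_a, X_b$,
\[
\mathbb{P}(X_a = X_b) = \sum_{k=1}^N \mathbb{P}(X_a = a_k)\,\mathbb{P}(X_b = a_k) = \langle v_a, v_b\rangle,
\]
so pairwise independence is exactly enough to turn each collision probability into a pairwise inner product, and the goal becomes finding a pair with $\langle v_a, v_b\rangle \ge c$.

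First I would argue by contradiction, assuming $\langle v_a, v_b\rangle < c$ for all distinct $a, b$, and then examine the aggregate vector $S = \sum_{i=1}^L v_i$. On one hand, the coordinates of $S$ sum to $\sum_k \sum_i p_{i,k} = L$, so Cauchy--Schwarz (equivalently, the power-mean inequality) gives the lower bound $\norm{S}^2 \ge L^2/N$. On the other hand, expanding the square yields
\[
\norm{S}^2 = \sum_{i=1}^L \norm{v_i}^2 + \sum_{i \ne j} \langle v_i, v_j\rangle .
\]
Here $\norm{v_i}^2 = \sum_k p_{i,k}^2 \le \sum_k p_{i,k} = 1$ because each $p_{i,k} \in [0,1]$, while the contradiction hypothesis bounds each of the $L(L-1)$ off-diagonal terms by $c$, giving the upper bound $\norm{S}^2 < L + L(L-1)c$.

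Combining the two estimates and dividing by $L$ produces $\frac{L}{N} < 1 + (L-1)c$, which rearranges to $L\bigl(\tfrac1N - c\bigr) < 1 - c$. Since $c < 1/N$ the coefficient $\tfrac1N - c$ is strictly positive, so this forces $L < \frac{1-c}{1/N - c}$. Choosing $L = L(N,c) := \max\bigl(2,\ \ceil{\frac{1-c}{1/N - c}}\bigr)$ therefore contradicts the assumption, so some pair must satisfy $\mathbb{P}(X_a = X_b) \ge c$, as required. The step to get right is the two-sided estimate on $\norm{S}^2$: the lower bound uses only that the coordinate sum of $S$ is fixed at $L$, while the upper bound needs both $\norm{v_i}^2 \le 1$ and the collision hypothesis. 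The small edge cases (the degenerate $N = 1$, and ensuring $L \ge 2$ so that at least one off-diagonal pair exists) are absorbed into the $\max$ in the definition of $L$.
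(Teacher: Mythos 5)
Your proof is correct, and while it opens exactly as the paper's does --- encoding each $X_i$ as a probability vector $v_i$ on the simplex and using pairwise independence to identify $\mathbb{P}(X_a = X_b)$ with the inner product $\langle v_a, v_b \rangle$ --- the finiteness argument at its core takes a genuinely different route. The paper turns the contradiction hypothesis into geometric separation: if all pairwise inner products are below $c$, then any two of the vectors lie at Euclidean distance at least $d = \sqrt{2(1/N - c)}$, and a $d$-separated subset of the compact simplex must be finite; this yields $L(N,c)$ non-constructively, with only a crude packing estimate of order $\left(1 + 2\sqrt{N}/d\right)^{N-1}$ recorded after the proof. You instead run a second-moment computation on $S = \sum_{i=1}^{L} v_i$, squeezing $\norm{S}^2$ between the Cauchy--Schwarz lower bound $L^2/N$ (from the fixed coordinate sum $L$) and the upper bound $L + L(L-1)c$ (from $\norm{v_i}^2 \le 1$ plus the hypothesis), which forces $L(1/N - c) < 1 - c$. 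What this buys is an explicit and much sharper threshold $L = \max\bigl(2, \ceil{(1-c)/(1/N - c)}\bigr)$, linear rather than exponential in $N$ in the regime the paper actually uses: for $c = 1/(2N)$ your bound is $2N - 1$, while the paper's packing estimate is roughly $(2N+1)^{N-1}$. Your bookkeeping is sound at the points that matter: with $L \ge 2$ there is at least one off-diagonal pair, so the upper bound on $\norm{S}^2$ is strict and any $L$ at or above the threshold contradicts the strict inequality, and the $\max$ with $2$ absorbs the degenerate case $N = 1$. In particular, plugging your explicit $L$ into Theorem~\ref{thm:colored_f>1:randomized} would give a concrete, small initial name space, which the paper's compactness argument does not provide directly.
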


\begin{proof}
    We will view the random variables~\(X_i\) as vectors \(x=(x_k)_{1\leq k\leq N}\) in the standard simplex \(\Delta^{N-1}\subseteq\mathbb{R}^N\) with \(x_k = \mathbb{P}(X_i = a_k)\).
    We then need to show that there are distinct vectors~\(x\) and~\(y\) such that we have following inequality on their dot product:
    \begin{equation}
        x\cdot y=\sum_{k=1}^N x_k y_k \ge c
    \end{equation}

    Every vector \(x\in\Delta^{N-1}\) satisfies
    \(\lVert x\rVert_2^2=\sum_{k=1}^N x_k^2 \ge {1}/{N}\), since \((\sum_{k=1}^N x_k)^2 \le N\sum_{k=1}^N x_k^2\) by the Cauchy--Schwarz inequality.

    Suppose by contradiction that for some \(c<{1}/{N}\) there is an arbitrarily large finite
    collection of vectors with the property that for all distinct vectors \(x,y\), we have
    \begin{equation}\label{eq:inner-small}
        x\cdot y < c.
    \end{equation}
    Then for any distinct~\(x\) and~\(y\), we have
    \begin{equation}
        \lVert x-y\rVert_2^2 = \lVert x\rVert_2^2 + \lVert y\rVert_2^2 - 2\,x\cdot y
        > 2\Big(\frac{1}{N}-c\Big) > 0.
    \end{equation}
    Hence all vectors are separated by the positive distance
    \begin{equation}
        d = \sqrt{2\big(\tfrac{1}{N}-c\big)}.
    \end{equation}
    But the simplex \(\Delta^{N-1}\) is compact, so any \(d\)-separated subset of \(\Delta^{N-1}\) is finite. Therefore there is a uniform finite upper bound \(M=M(N,c)\) on the size of any set satisfying \eqref{eq:inner-small}. Taking \(L=M+1\) proves the existence of the claimed finite \(L=L(N,c)\).
\end{proof}

A crude explicit packing-type bound for $L$ may be obtained as follows.
The simplex is contained in a Euclidean ball of radius at most \(\sqrt{N}\),
so the maximal number \(M(N,c)\) of points pairwise at distance at least \(d\) satisfies (very roughly)
\[
    M(N,c) < \left(1+\frac{2\sqrt{N}}{d}\right)^{\,N-1}
    = \left(1+\frac{2\sqrt{N}}{\sqrt{2(1/N-c)}}\right)^{N-1},
\]
and one may take \(L=M(N,c)\).

\section{Compactness of $\ho_f$} \label{sec:compactness}

In this section, we prove that the set of runs of a protocol in $\ho_f$ with a \textit{finite} number of possible initial configurations is compact with respect to the \textit{longest-common-prefix} metric.
Given two sequences $\alpha = (\alpha_k)_{k\in\mathbb{N}_0}$ and $\beta = (\beta_k)_{k\in\mathbb{N}_0}$, their \emph{longest-common-prefix distance}~\cite{AS84} is:
\begin{equation}
d(\alpha,\beta)
=
2^{-\inf\{ k \in \mathbb{N}_0 \mid \alpha_k \neq \beta_k \}}.
\end{equation}
If~$\alpha$ and~$\beta$ are identical, then the infimum is $+\infty$ and their distance is zero.
If not, then their distance is equal to~$2^{-K}$ where~$K$ is the smallest index at which the sequences differ.

We first define schedules using \textit{communication graphs} and show that the set of schedules is compact when equipping it with the longest-common-prefix metric.
We then show that the mapping from schedules to protocol runs in $\ho_f$ is continuous.
This then proves that the set of runs is compact, as the continuous image of a compact set.

We define~$\mathcal{G}_{n,f}$ as the set of directed graphs $G = (V,E)$ such that:
\begin{enumerate}
    \item the set of vertices of~$G$ is equal to $V = [n]$,
    \item every vertex has a self-loop in $G$, i.e., $(i, i)\in E$ for all $i\in [n]$,
    \item the set of incoming neighbors of every vertex has size at least $n-f$, i.e., $\lvert \{ j \in [n] \mid (j,i) \in E  \} \rvert \geq n - f$ for every $i \in [n]$.
\end{enumerate}
A \emph{communication graph} is a graph $G \in \mathcal{G}_{n,f}$.
A \emph{schedule} is an infinite sequence $G_1,G_2,\dots$ of communication graphs.

To prove compactness of the set of schedules, we use Tychonoff's theorem
(see, \eg~\cite[Chapter~I, \S9, no.~5, Theorem~3]{Bourbaki:top}).

\begin{theorem}[Tychonoff]\label{thm:tychonoff}
    The product of a family of compact topological spaces is compact with respect to the product topology.
\end{theorem}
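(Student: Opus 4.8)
The plan is to prove Tychonoff's theorem through the ultrafilter characterization of compactness, which avoids any direct manipulation of open covers in the product. First I would recall the standard fact that a topological space $X$ is compact if and only if every ultrafilter on $X$ converges to at least one point of $X$; this equivalence is itself derived from the finite-intersection-property formulation of compactness. Writing $X = \prod_{\alpha \in A} X_\alpha$ with each factor $X_\alpha$ compact and $X$ carrying the product topology, the goal reduces to showing that an arbitrary ultrafilter $\mathcal{U}$ on $X$ converges.

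The key steps are as follows. For each index $\alpha$, the projection $\pi_\alpha \colon X \to X_\alpha$ pushes $\mathcal{U}$ forward to the collection $\pi_\alpha(\mathcal{U}) = \{\, S \subseteq X_\alpha : \pi_\alpha^{-1}(S) \in \mathcal{U} \,\}$, which is again an ultrafilter on $X_\alpha$, since the pushforward of an ultrafilter along any map is an ultrafilter. As $X_\alpha$ is compact, $\pi_\alpha(\mathcal{U})$ converges to some point $x_\alpha \in X_\alpha$; choosing one such limit for every $\alpha$ uses the axiom of choice and yields a candidate limit $x = (x_\alpha)_{\alpha \in A} \in X$. It then remains to verify that $\mathcal{U} \to x$ in the product topology. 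Because the sets $\pi_\alpha^{-1}(V)$, with $V$ an open neighborhood of $x_\alpha$, form a subbasis of neighborhoods of $x$, and because an ultrafilter that contains every member of a neighborhood subbasis of a point also contains every basic (finite-intersection) neighborhood and hence converges to that point, it suffices to check each subbasic neighborhood. For such a set, the relation $\pi_\alpha(\mathcal{U}) \to x_\alpha$ gives $V \in \pi_\alpha(\mathcal{U})$, that is $\pi_\alpha^{-1}(V) \in \mathcal{U}$, which is exactly what is required.

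I expect the main obstacle to be foundational rather than computational: the argument leans on the axiom of choice in two places --- once through the ultrafilter lemma, which guarantees a rich enough supply of ultrafilters in the compactness characterization, and once in selecting the coordinate limits $x_\alpha$ --- and this reliance is unavoidable, since Tychonoff's theorem is equivalent to the axiom of choice. The one genuinely technical lemma to pin down is that convergence of an ultrafilter on a neighborhood subbasis already forces convergence to the point, which follows because an ultrafilter is closed under finite intersections and supersets. As an alternative route I would mention the Alexander subbasis theorem: taking the subbasis $\{\, \pi_\alpha^{-1}(U) \,\}$, one shows that any subbasic open cover with no finite subcover would, after grouping its members by index $\alpha$, fail to cover some factor $X_\alpha$, contradicting the compactness of $X_\alpha$; this trades the ultrafilter machinery for Zorn's lemma but is equally self-contained.
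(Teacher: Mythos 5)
Your proof is correct, but there is nothing in the paper to compare it against: the paper does not prove this statement at all. Tychonoff's theorem is quoted as a classical result with a citation to Bourbaki and used as a black box exactly once, to deduce compactness of the set of schedules (Lemma~\ref{lem:schedules:are:compact}). Your ultrafilter argument is the standard proof and its key steps all check out: the pushforward $\pi_\alpha(\mathcal{U})$ is indeed an ultrafilter; compactness of each factor yields a limit $x_\alpha$ of the pushforward; and the decisive lemma --- a filter containing every subbasic neighborhood of a point contains every neighborhood of that point, by closure under finite intersections and supersets --- correctly reduces convergence of $\mathcal{U}$ to the membership $\pi_\alpha^{-1}(V) \in \mathcal{U}$, which is precisely the statement $V \in \pi_\alpha(\mathcal{U})$ furnished by convergence of the pushforward. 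Your accounting of the axiom of choice (the ultrafilter lemma plus the selection of the coordinate limits, with full AC unavoidable by Kelley's equivalence) is also accurate. Two remarks. First, for the paper's actual application the full theorem is overkill: the product used there is $\mathcal{G}_{n,f}^{\mathbb{N}}$, a countable product of \emph{finite discrete} spaces, whose compactness follows from an elementary diagonalization (K\"onig's lemma) or directly from sequential compactness in the prefix metric, with no choice principles at all; a reader wanting a self-contained development could substitute that special case for your general argument. Second, your closing aside on the Alexander subbasis route is slightly misstated: compactness of $X_\alpha$ is used to show that the cover members with index $\alpha$ \emph{cannot} cover $X_\alpha$ (otherwise finitely many of their preimages would already cover the product); the contradiction is then obtained by assembling a point of the product out of uncovered coordinates, so it is a contradiction with the covering property of the family, not directly with compactness of a factor.
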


\begin{lemma}\label{lem:schedules:are:compact}
    The set of schedules is compact with respect to the longest-common-prefix metric.
\end{lemma}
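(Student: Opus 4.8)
The plan is to prove compactness of the set of schedules by exhibiting it as a product of compact spaces and invoking Tychonoff's theorem (Theorem~\ref{thm:tychonoff}). Recall that a schedule is an infinite sequence $G_1, G_2, \dots$ of communication graphs, each drawn from the set $\mathcal{G}_{n,f}$ of directed graphs on vertex set $[n]$ satisfying the self-loop and in-degree constraints. The first observation is that $\mathcal{G}_{n,f}$ is a \emph{finite} set: there are only finitely many directed graphs on the fixed finite vertex set $[n]$, and imposing conditions (2) and (3) only restricts to a subset of these. Hence, equipped with the discrete topology, $\mathcal{G}_{n,f}$ is a compact topological space (every finite topological space is compact).

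Next I would identify the set of schedules with the countable product $\prod_{k \geq 1} \mathcal{G}_{n,f} = \mathcal{G}_{n,f}^{\mathbb{N}}$. By Tychonoff's theorem, this product is compact with respect to the product topology. The remaining step is to verify that the product topology on $\mathcal{G}_{n,f}^{\mathbb{N}}$ coincides with the topology induced by the longest-common-prefix metric $d$ defined in the excerpt. This is the standard fact that, for a product of discrete spaces indexed by $\mathbb{N}$, the product topology is metrized by the longest-common-prefix (equivalently, Baire/Cantor-type) metric: a basic open set in the product topology fixes finitely many coordinates, and the open ball $\{\beta : d(\alpha, \beta) < 2^{-K}\}$ is exactly the set of sequences agreeing with $\alpha$ on the first $K$ coordinates, i.e., a cylinder fixing coordinates $1, \dots, K$. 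I would spell out both inclusions: every metric ball is a product-open cylinder, and every basic product-open set (fixing coordinates in a finite index set contained in $\{1, \dots, K\}$) contains, and is a union of, such metric balls. Since the two topologies agree and the product is compact by Tychonoff, the metric space is compact.

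The main potential obstacle is a minor indexing subtlety rather than a genuine mathematical difficulty: the metric $d$ is defined for sequences indexed by $\mathbb{N}_0$, whereas schedules are written $G_1, G_2, \dots$ indexed from $1$; I would simply note that this reindexing is immaterial and that the longest-common-prefix construction applies verbatim to sequences over the finite alphabet $\mathcal{G}_{n,f}$. A second small point to address carefully is that compactness is a topological property, so it transfers from the product topology to the metric only because the two topologies are \emph{equal}, not merely comparable; I would therefore state the equality of topologies explicitly rather than leaving it implicit. Beyond these bookkeeping matters, the argument is routine: finiteness of $\mathcal{G}_{n,f}$, Tychonoff, and the standard metrization of a countable product of discrete spaces are the three ingredients, and none requires any estimation or construction specific to the $\ho_f$ model.
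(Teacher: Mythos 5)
Your proposal is correct and follows exactly the paper's own argument: identify the schedule space with $\mathcal{G}_{n,f}^{\mathbb{N}}$, note that $\mathcal{G}_{n,f}$ is finite hence compact in the discrete topology, apply Tychonoff's theorem, and check that the product topology coincides with the one induced by the longest-common-prefix metric. The only difference is that you spell out the topology-equality verification that the paper dismisses as ``straightforward,'' which is a reasonable elaboration rather than a different route.
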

\begin{proof}
    The set of schedules is equal to $\mathcal{G}_{n,f}^\mathbb{N}$.
    It is straightforward to show that the product topology is induced by the longest-common-prefix metric if every copy of~$\mathcal{G}_{n, f}$ is equipped with the discrete metric.
    Each~$\mathcal{G}_{n, f}$ is compact since it is finite.
    Thus, applying Theorem~\ref{thm:tychonoff} shows that the set of schedules is compact as well.
\end{proof}

%

Given a protocol in $\ho_f$, denoting by~$\Sigma$ the set of 
schedules, by~$\mathcal{S}$ the set of runs of $\ho_f$, and
by~$\mathcal{C}_0$ the set of initial configurations of the protocol, we
define the mapping $f:\mathcal{C}_0 \times \Sigma \to \mathcal{S}$ inductively 
on every element $(C_0, \alpha) \in \mathcal{C}_0 \times \Sigma$
by simulating the protocol starting from the given initial configuration
$C_0$ and delivering messages in round~$r$ according to the
round-$r$ communication graph~$G_r$ of schedule $\alpha$.

\begin{lemma}\label{lem:scheduler:is:continuous}
    The function~$f\colon \Sigma \times \mathcal{C}_0 \to \mathcal{S}$ is continuous when equipping~$\Sigma$ and~$\mathcal{S}$ with the longest-common-prefix metric and $\mathcal{C}_0$ with the discrete metric.
\end{lemma}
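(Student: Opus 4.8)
The plan is to prove continuity through the standard \emph{causality} (non-anticipation) argument for the longest-common-prefix metric: finite prefixes of the output run depend only on finite prefixes of the input. Since $\mathcal{C}_0$ carries the discrete metric, a basic neighborhood of a point $(C_0,\alpha)$ in the product topology has the form $\{C_0\}\times B(\alpha,\delta)$, where $\{C_0\}$ is open because singletons are open in a discrete space. It therefore suffices to fix the initial configuration $C_0$ and vary the schedule within a small ball around $\alpha$, reducing the problem to showing that nearby schedules (agreeing on a long prefix) produce nearby runs.

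First I would make precise the claim that the run through round $r$ is determined by $C_0$ together with the first $r$ communication graphs $G_1,\dots,G_r$, and prove it by induction on $r$ directly from the inductive definition of $f$. The configuration entering round $1$ is $C_0$; and given the configuration entering round $k$, which by the inductive hypothesis depends only on $C_0$ and $G_1,\dots,G_{k-1}$, the events and resulting configuration of round $k$ are obtained by delivering messages according to $G_k$ and then performing the local computations. Because $\ho_f$ proceeds in lockstep, the prefix of the run through round $r$ is a well-defined finite initial segment of the alternating configuration/event sequence, and each round contributes only finitely many entries. I would write $\ell(r)$ for the index in $\mathcal{S}$ up to which round $r$ has been fully determined, observing that $\ell$ is monotone with $\ell(r)\to\infty$ as $r\to\infty$ since the run is infinite.

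Then I would convert this into the $\varepsilon$--$\delta$ statement. Given a target $\varepsilon=2^{-m}$, I would choose $r$ with $\ell(r)\ge m$ and then pick $\delta$ small enough that $d(\alpha',\alpha)<\delta$ forces $\alpha'$ and $\alpha$ to agree on $G_1,\dots,G_r$. For any $(C_0,\alpha')$ in the resulting neighborhood, the causality claim yields that $f(C_0,\alpha')$ and $f(C_0,\alpha)$ share a prefix of length at least $m$, so $d\bigl(f(C_0,\alpha'),f(C_0,\alpha)\bigr)\le 2^{-m}=\varepsilon$, which is exactly continuity.

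The main obstacle is purely bookkeeping: fixing the indexing conventions so that ``agreeing on the first $r$ graphs'' translates correctly through the metric $d(\alpha,\beta)=2^{-\inf\{k:\alpha_k\neq\beta_k\}}$ into a concrete threshold $\delta$, and verifying that each round contributes a finite, well-defined block of entries to the run so that $\ell$ is monotone with $\ell(r)\to\infty$. The substantive content---the causality induction---is immediate from the round-by-round, lockstep structure of the simulation, and no property of $\ho_f$ beyond this lockstep structure is required.
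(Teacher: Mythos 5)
Your proposal is correct and takes essentially the same approach as the paper: the causality observation that the round-$r$ prefix of the run is determined by $C_0$ and the first $r$ communication graphs, combined with the discreteness of $\mathcal{C}_0$. The paper merely compresses this into the one-line nonexpansiveness bound $d(f(\alpha,C_0), f(\beta,C_0)) \leq d(\alpha,\beta)$, whereas you unfold it into an explicit induction and $\varepsilon$--$\delta$ argument.
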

\begin{proof}
    Since the round-$r$ prefix of $f(\alpha, C_0)$ is entirely determined by~$C_0$ and the round-$r$ prefix of~$\alpha$, we have $d(f(\alpha,C_0), f(\beta,C_0)) \leq d(\alpha, \beta)$ for all $\alpha,\beta\in\Sigma$ and all $C_0\in\mathcal{C}_0$.
\end{proof}

\begin{lemma}\label{lem:ho:is:compact}
    For a protocol $\calP$ whose set~$\mathcal{C}_0$ of initial configurations is finite,
    the set of runs of $\calP$ in $\ho_f$ is compact with respect to the longest-common-prefix metric.
\end{lemma}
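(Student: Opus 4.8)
The plan is to exhibit the set of runs of $\calP$ as the continuous image of a compact space and then invoke the standard topological fact that the continuous image of a compact set is compact. Both ingredients are already available: Lemma~\ref{lem:schedules:are:compact} gives compactness of the schedule space $\Sigma = \mathcal{G}_{n,f}^{\mathbb{N}}$, and Lemma~\ref{lem:scheduler:is:continuous} gives continuity of the simulation map $f\colon \Sigma \times \mathcal{C}_0 \to \mathcal{S}$.

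First I would note that, by the inductive definition of $f$ (it simulates $\calP$ from the given initial configuration and delivers the round-$r$ messages according to the round-$r$ communication graph $G_r$ of the schedule), a sequence is a run of $\calP$ in $\ho_f$ precisely when it equals $f(\alpha, C_0)$ for some $\alpha \in \Sigma$ and some $C_0 \in \mathcal{C}_0$. Hence the set of runs of $\calP$ is exactly the image $f(\Sigma \times \mathcal{C}_0)$.

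Next I would check that the domain $\Sigma \times \mathcal{C}_0$ is compact. Since $\mathcal{C}_0$ is finite by hypothesis, it is compact under the discrete metric; together with the compactness of $\Sigma$, Tychonoff's theorem (Theorem~\ref{thm:tychonoff})---or simply the two-factor case---shows that $\Sigma \times \mathcal{C}_0$ is compact. Applying continuity of $f$ (Lemma~\ref{lem:scheduler:is:continuous}), the image $f(\Sigma \times \mathcal{C}_0)$ is compact, and since this image is exactly the set of runs of $\calP$, carrying the subspace topology induced by the longest-common-prefix metric on $\mathcal{S}$, we conclude that the set of runs of $\calP$ in $\ho_f$ is compact with respect to that metric.

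I do not expect any substantive obstacle: the argument is a direct assembly of the two preceding lemmas. The only points deserving a line of care are verifying that the image of $f$ is precisely the run set (so that we prove compactness of the intended object) and confirming that compactness of the image is measured in the metric topology on $\mathcal{S}$ rather than merely in some abstract product topology---both of which follow immediately from the constructions fixed in Lemmas~\ref{lem:schedules:are:compact} and~\ref{lem:scheduler:is:continuous}.
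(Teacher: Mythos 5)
Your proof is correct and takes essentially the same route as the paper's: both identify the run set as the image of $f$, and combine compactness of $\Sigma$ (Lemma~\ref{lem:schedules:are:compact}), continuity of $f$ (Lemma~\ref{lem:scheduler:is:continuous}), and finiteness of $\mathcal{C}_0$ via the fact that continuous images of compact sets are compact. The only cosmetic difference is that you fold the finiteness of $\mathcal{C}_0$ into compactness of the product $\Sigma \times \mathcal{C}_0$, whereas the paper writes the run set as the finite union $\bigcup_{C_0 \in \mathcal{C}_0} f(\Sigma, C_0)$ of compact sets --- the two packagings are interchangeable.
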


\begin{proof}
    It is straightforward to check that the set of runs of $\calP$ is the image of the function~$f$.
    We thus have $\mathcal{S} = \bigcup_{C_0 \in \mathcal{C}_0} f(\Sigma, C_0)$.
    Each set $f(\Sigma, C_0)$ is compact as the continuous image of a compact set.
Then, if~$\mathcal{C}_0$ is finite, the set~$\mathcal{S}$ is compact as the finite union of compact sets.
\end{proof}

\section{$\amp$ is equivalent to the FLP definition} \label{sec:AMP=FLP}

In the FLP version of the $\amp_f$ model, we denote the communication primitives by flp-send and flp-recv.
In a send event $\operatorname{flp-send}_i(m, j)$, process $p_i$ sends message $m$ to a \emph{single} process $p_j$.
In a receive event, as in $\amp_f$, the process receives either a single message or nothing.
As in $\amp_f$, runs in $\flp_f$ may have at most $f$ faulty processes and must satisfy the Integrity and No Duplicates properties.
Importantly, we have a stronger condition than Faulty Quasi-Liveness where we require that non-faulty processes receive all messages sent to them.

In order to show that our definition of $\amp_f$ is equivalent to the $\flp_f$ definition, we construct simulations between the two models (see Section~\ref{sec:models} for the definition of a simulation).
For the simulation of $\amp_f$ in $\flp_f$,
the broadcasting of a message in $\amp_f$ can be simulated in $\flp_f$
by sending it individually to all processes, and a message $m$ is said 
to be received in $\amp_f$ if $m$ is received in $\flp_f$.

\begin{algorithm}[H]
    \caption{Machine $P_i$ simulating $\amp_f$ in $\flp_f$}
    \begin{algorithmic}[1]
        \State{When $\operatorname{amp-bc}_i(m)$ occurs:}
        \IndState Enable $\operatorname{flp-send}_i(m, j)$ for all $j$
        \medskip
        \State{Enable $\operatorname{amp-recv}_i(m)$ when:}
        \IndState $\operatorname{flp-recv}_i(m)$ occurs
    \end{algorithmic}
\end{algorithm}

For the simulation of $\flp_f$ in $\amp_f$, the sending of a message $m$ in $\flp_f$ by $p_i$ to $p_j$ can be simulated in $\amp_f$ by having $P_i$ broadcasting the message $\left\langle m, j \right\rangle$, and a message $m$ is said to be received by process $p_i$ in $\flp_f$ if $P_i$ receives $\left\langle m, i \right\rangle$ in $\amp_f$.

\begin{algorithm}[H]
    \caption{Machine $P_i$ simulating $\flp_f$ in $\amp_f$}
    \label{algo:flp_in_amp}
    \begin{algorithmic}[1]
        \State{When $\operatorname{flp-send}_i(m, j)$ occurs:}
        \IndState Enable $\operatorname{amp-bc}_i(\left\langle m, j \right\rangle)$
        \medskip
        \State{Enable $\operatorname{flp-recv}_i(m)$ when:}
        \IndState $\operatorname{amp-recv}_i(\left\langle m, i \right\rangle)$ occurs
    \end{algorithmic}
\end{algorithm}

%

Note that in a simulated run $\alpha$ of the system described by Algorithm~\ref{algo:flp_in_amp}, the last message $m$ sent by a faulty process $p_i$ to a non-faulty process $p_j$ is not necessary received by $p_j$, preventing $\alpha$ from being valid in $\flp_f$.
This is because of the Faulty Quasi-Liveness property of $\amp_f$ whereby $P_i$'s last broadcast $\left\langle m, j \right\rangle$ is not necessarily received by all machines, in particular $P_j$.
Fortunately this does not prevent non-faulty processes in $\alpha$ from deciding, since for them, $\alpha$ is indistinguishable from a valid $\flp_f$ run $\alpha'$ that is almost identical to $\alpha$ except that faulty simulated processes crash before they can do the incomplete send, e.g., $\operatorname{flp-send}_i(m, j)$ in the case of $p_i$.

\end{document}